%% file: rrpq_arxiv.tex
\documentclass[12pt,reqno]{article}
\usepackage[usenames,dvipsnames]{xcolor}

\usepackage{amsmath,mathtools,tikz,pgfplots,soul}

\input{working_arxiv}

\usepackage[colorlinks=true,allcolors=NavyBlue,citecolor=NavyBlue]{hyperref}
\usepackage[noabbrev]{cleveref}

\usepackage[T1]{fontenc}

\usepackage{newtxtext}

\usepackage{lmodern,newtxtext,newtxmath}
\usepackage[cal=cm]{mathalfa}

\usepackage{graphicx} 
\usepackage{amsmath}  
\usepackage{scalerel} 

\usepackage{booktabs,array,tablefootnote,subcaption} 
\usepackage[titletoc]{appendix}
\newcolumntype{C}[1]{>{\centering\arraybackslash}m{#1}}

\makeatletter
\newcommand{\vast}{\bBigg@{4}}
\newcommand{\Vast}{\bBigg@{5}}
\makeatother

\let\Pr\relax
\DeclareMathOperator{\Pr}{Pr}

\DeclareMathOperator*{\esssup}{ess\,sup}
\DeclareMathOperator*{\essinf}{ess\,inf}

\DeclareMathOperator{\LF}{LF}

\definecolor{azure(colorwheel)}{rgb}{0.0, 0.5, 1.0}
\definecolor{brandeisblue}{rgb}{0.0, 0.44, 1.0}
\definecolor{ceruleanblue}{rgb}{0.16, 0.32, 0.75}
\definecolor{airforceblue}{rgb}{0.36, 0.54, 0.66}
\definecolor{bleudefrance}{rgb}{0.19, 0.55, 0.91}
\definecolor{darkspringgreen}{rgb}{0.09, 0.45, 0.27}
\definecolor{rulecolor}{rgb}{0.13, 0.35, 0.5}

\usepgfplotslibrary{fillbetween,decorations.softclip}
\usetikzlibrary{arrows.meta}
\usetikzlibrary{patterns,calc}

\newcommand{\papertitle}{{\bf Prices vs.~Quantities: Robust Regulation}\footnote[$*$]{\thanktext}
}

\newcommand{\name}{Zi Yang Kang\footnote[$\dag$]{\affiliationi}}

\newcommand{\affiliationi}{Department of Economics, University of Toronto; \href{mailto:zy.kang@utoronto.ca}{\tt zy.kang@utoronto.ca}.}

\newcommand{\thanktext}{This paper is dedicated to Jeremy Bulow.  I thank Gabriel Carroll, Rob McMillan, Paul Milgrom, Andy Skrzypacz, Colin Stewart, Shosh Vasserman, and especially Laura Doval, for helpful discussions.  {\tt Refine.ink} was used to check the paper for consistency and clarity.
}
\newcommand{\abstracttext}{
\noindent This paper revisits the classic instrument choice problem in a setting with consumption externalities, through the lens of robust mechanism design.  A regulator can implement any incentive-compatible policy but is uncertain about how individual demand is correlated with marginal externalities, and evaluates policies by worst-case welfare.  The optimal policy is a quantity control: a floor for positive externalities and a ceiling for negative externalities.  If the sign of the correlation is known, a uniform tax or subsidy can be optimal.  The framework also applies to regulatory uncertainty and costly screening, providing a welfare-based explanation for the prevalence of non-price policies.
\\[2ex]
{\em JEL classification}: D81, D82, H23, L51\\[1ex]
{\em Keywords}: quantity regulation, externalities, mechanism design, correlation uncertainty, robustness}
\newcommand{\inserttitle}{\begin{center}\hfill\\[0ex]
\LARGE\papertitle\\[3ex]
\large
\name\\[3ex]
{\monthyeardate\today}
\\[4ex]
\end{center}}	

\begin{document}
\onehalfspacing
\inserttitle
\begin{abstract}
\abstracttext
\end{abstract}
\setstretch{1.5}
\thispagestyle{empty}

\clearpage

\setcounter{page}{1}

\section{Introduction}
\label{sec:introduction}

Policymakers frequently rely on quantity controls to regulate externalities.  For example, the recreational use of some opiates and opioids is widely banned, while immunizations are often mandated by schools and employers.  Even when outright bans and mandates are not used, quantity restrictions may still apply.  Most countries impose compulsory education laws that require children to receive some minimum amount of schooling.  One-gun-a-month laws and the Combat Methamphetamine Epidemic Act, respectively, limit the quantity of firearms and pseudoephedrine that any individual may purchase within a given month. 

However, many economists believe in the superiority of price regulation through taxes and subsidies.  The COVID-19 pandemic provides a stark illustration of this tension between economic prescriptions and policy practice.  Several prominent economists, including Robert Litan and Gregory Mankiw, advocated paying individuals to get vaccinated in order to achieve herd immunity \citep{litan20,mankiw20}.  Yet few schools, employers, or governments heeded these calls, choosing instead to impose vaccine mandates.

At first glance, this may look like the familiar prices-versus-quantities problem studied by \citet{weitzman74}.  However, as a large subsequent literature has emphasized, two aspects of that benchmark are ill-suited to many policy settings.  First, it places substantial informational demands on the regulator \citep{baumoloates88}.  Second, it typically restricts attention to simple instruments, even though sufficiently rich price regulation (e.g., nonlinear pricing) can dominate quantity controls \citep{kaplowshavell02}.  Put differently, \citeauthor{weitzman74}'s approach is most natural in settings with relatively \emph{rich} information but a \emph{limited} set of instruments, whereas many real-world environments feature \emph{limited} information even when a \emph{rich} set of instruments is feasible.

This paper takes a new approach to the instrument choice problem by embedding it in a mechanism-design framework with limited information and a rich set of feasible instruments.  This framework comprises two key ingredients: \emph{heterogeneity} and \emph{robustness}.  Agents are \emph{heterogeneous} in that they differ both in their consumption preferences for a homogeneous good and in the amount of externality they generate per unit of consumption.  In the COVID-19 vaccination example, individuals vary in their willingness to pay for the vaccine as well as in the marginal social benefit their vaccination generates, for instance because they differ in their propensity to wear masks or avoid large crowds.  While the regulator may observe revealed preferences through consumption decisions, she does not observe the amount of externality generated by each agent and may have only limited information about its distribution, such as its mean.  This reflects real-world settings in which policymakers know the average benefit of vaccination, but not how marginal benefits are distributed or correlated with willingness to pay.  Faced with this uncertainty, the regulator adopts a \emph{robust} approach by evaluating policies according to their worst-case welfare performance.  Importantly, the regulator is not restricted to simple instruments \emph{a priori}: she can choose any incentive-compatible policy, including nonlinear price schedules.

The first main result shows that robustness to correlation uncertainty arising from agent heterogeneity favors quantity regulation.  Although the regulator can choose any incentive-compatible policy, her best welfare guarantee is achieved by a simple policy: imposing a quantity floor or ceiling when the externality is positive or negative, respectively.  In special cases, such as when agents have unit demand, this result takes the form of mandates and bans.  

To see why, consider the COVID-19 vaccination example.  On the one hand, paying individuals to get vaccinated performs well when those with the highest willingness to pay also generate the largest marginal social benefit. On the other hand, if willingness to pay and marginal social benefit are negatively correlated---for instance, if individuals who are most eager to get vaccinated already take extensive precautions---then subsidies disproportionately attract individuals whose vaccination generates relatively little additional social benefit. In such cases, any given subsidy can fail to realize essentially all of the potential social benefit.  By contrast, a mandate may achieve a better welfare guarantee by ensuring full participation, albeit at the cost of potentially vaccinating some individuals whose willingness to pay and marginal social benefit are both low.  Depending on how this tradeoff resolves, the optimal policy is either to mandate vaccination or not to intervene at all, but never to subsidize vaccination.

The second main result shows that the optimal regulatory instrument---price or quantity---depends critically on the sign of the correlation between willingness to pay and amount of externality generated, when this sign is known. In particular, knowledge of the sign of the correlation can rule out the worst-case selection patterns that drive the case for quantity regulation under correlation uncertainty.  Of course, if willingness to pay and marginal social benefit are known to be negatively correlated, the previous logic continues to apply and quantity regulation remains optimal.  However, if willingness to pay and marginal social benefit are known to be positively correlated---for instance, if individuals who are most eager to get vaccinated are also those at greatest risk of exposure and transmission---then price regulation becomes optimal.  In this case, the optimal policy nonetheless remains simple: the regulator optimally sets a uniform subsidy equal to the \emph{average} social benefit generated.  While the presence of such positive correlation is unclear in the COVID-19 context, it is more plausible for other vaccines, such as those for human papillomavirus (HPV) or mpox, where individuals at higher risk may also have higher willingness to pay.  Similar considerations arise in other externality settings: for example, many countries subsidize electric vehicle adoption, where individuals with higher willingness to pay, such as high-mileage drivers, are expected to generate larger marginal social benefits through greater emissions reductions.

It is worth emphasizing that both heterogeneity and robustness are crucial for the results in this paper.  Without heterogeneity, the marginal amount of externality generated by each agent would be the same.  In this case, the classical analysis due to \citet{pigou20} applies: the regulator can do no better than to set a tax or subsidy equal to that amount of externality.  Without robustness, the regulator might instead maximize expected welfare in a standard Bayesian environment with a known joint distribution of willingness to pay and amount of externality generated.  This is a substantial informational requirement on the regulator---a point that I revisit below.  But putting that aside, given the joint distribution, the regulator would optimally set marginal taxes or subsidies equal to the expected marginal amount of externality generated by agents consuming that marginal unit of the good.  In such settings, quantity regulation is optimal only in special cases or when used in conjunction with price regulation, as \citet{kaplowshavell02} have argued.  In this sense, it is the interaction between heterogeneity and robustness that drives the results in this paper.

While the argument above has been framed in terms of externalities, it extends more broadly to other economic frictions.  Consider, for instance, environments in which goods are allocated through costly and socially wasteful screening, such as waiting or delay.  In many such environments, policymakers instead rely on random or non-price allocation mechanisms that aim to reduce or even eliminate screening---such as arbitration rather than costly litigation, and lotteries rather than ordeals.  The results of this paper suggest that these policies can be understood as a robust response to uncertainty about how agents' willingness to pay is correlated with their waiting costs.  If individuals with higher willingness to pay also systematically find it less costly to wait, then screening mechanisms can improve allocative efficiency. However, when policymakers have limited information about this correlation, mechanisms that rely on costly screening may yield poor welfare guarantees.  As in the externality setting, uncertainty about this correlation leads to the robust optimality of non-price allocation.

\subsection{Related Literature}

This paper shares a similar motivation with a positive literature that studies why quantity regulation is so prevalent even when price regulation is theoretically appealing. Key contributions by \citet{glaesershleifer01}, \citet{glaeseretal01}, and \citet{glaesershleifer02} develop two complementary enforcement-based explanations: quantity restrictions can both strengthen enforcement incentives and reduce enforcement costs by making violations easier to detect.  A different strand of the literature, pioneered by \citet{buchanantullock75}, examines political economy mechanisms.  The present paper complements these explanations by developing a welfare-based argument for quantity regulation and by adding nuance: price regulation---also observed in practice---can be explained by the same argument when the sign of the correlation between demand and marginal externality is known.

The argument developed in this paper relates directly to a growing literature that emphasizes the importance of heterogeneity in the regulation of externalities.  The seminal work of \citet{diamond73} shows that when individuals generate heterogeneous marginal externalities, the optimal uniform corrective tax generally differs from the ``na\"ive'' uniform tax based on the unweighted average marginal harm, and instead depends on how marginal externalities covary with individuals' demand elasticities.  A recent empirical literature has documented such heterogeneity and correlation in specific settings where rich data are available, including work by \citet{knittelsandler18}, \citet{griffithetal19}, and \citet{grummonetal19}.  At the same time, these studies underscore how demanding the informational requirements of optimal taxation are: identifying the relevant correlation structure typically requires detailed micro-level data on both consumption and social costs, data that are unavailable in many policy settings in which such heterogeneity is likely to be quantitatively important, such as those discussed by \citet{fleischer15}.  In this sense, the present paper complements the existing literature by studying robust regulatory policies when such detailed data are unavailable.

Another closely related literature studies the normative question of how to choose between price and quantity instruments in regulation.  A key premise of this literature that resonates with the present paper is the difficulty of measuring externalities precisely, as described by \citet{baumoloates71}:
\begin{quote}
``The proper level of the Pigouvian tax (subsidy) [\dots] is equal to the marginal net damage (benefit) produced by that activity.  The difficulty is that it is usually not easy to obtain a reasonable estimate of the money value of this marginal damage.''
\end{quote}
Given this difficulty, pioneering contributions by \citet{weitzman74}, \citet{robertsspence76}, and \citet{baumoloates88} examine the relative performance of price regulation and quantity regulation, as well as hybrid instruments.  Subsequent work has refined these insights and applied them to important policy environments (\eg, \citealp{pizer02}; \citealp{newellpizer03}).  However, \citet{kaplowshavell02} have argued that these results rely on restrictions on price regulation---such as linear taxes---and that sufficiently rich price instruments can do better.  The present paper departs from this literature in two ways: \emph{(i)}~the regulator can choose any incentive-compatible policy without restriction; and \emph{(ii)}~the regulator seeks robustness to uncertainty by evaluating policies under a worst-case welfare criterion, with limited information about the distribution of externalities.

The focus on robustness also connects this paper to the literature on mechanism design with worst-case objectives.  \citet{carroll19} provides a detailed survey of this literature.  Recent work has applied worst-case criteria to justify simple policies on the basis of robustness, including \citet{frankel14} in delegation; \citet{garrett14} in cost-based procurement; \citet{carroll15}, \citet{waltoncarroll22}, and \cite{carrollbolte23} in moral hazard; \citet{carroll17} in multidimensional screening; \citet{guoshmaya25} in monopoly regulation; and \citet{kangvasserman25} in welfare analysis.  Closer to the present paper's motivation of how worst-case reasoning bears on the comparison between price and quantity regulation, \citet{mishraetal25} study a monopoly regulation problem in which the regulator has a conjectured model but does not fully trust it: the regulator selects mechanisms that maximize a worst-case welfare guarantee over a set of plausible models and chooses among them by their expected performance under the conjectured model.\footnote{\citet{dworczakpavan22} use a similar selection rule in their study of robust information design.}  Their analysis compares price and quantity regulations as rich families of direct mechanisms in which the regulator specifies either a price rule or a quantity rule as a function of the monopolist's reported type.  Because demand is itself uncertain, these families are not equivalent, and the comparison is driven by how a regulated price maps into different realized quantities across plausible demand models.  They show that price and quantity regulations yield the same worst-case welfare guarantee and establish conditions under which one family outperforms the other.  By contrast, the present paper studies a different economic environment and source of uncertainty: the regulator wishes to regulate a heterogeneously-generated externality and seeks robustness with respect to the distribution of externalities---in particular, the unknown correlation between demand and externality.\footnote{This informational structure (where only mean or moment-based information is available) has been considered by other authors, such as \citet{carrascoetal18} and \citet{chezhong25}, who build on \citepos{bergemannschlag11} analysis of monopoly pricing under worst-case objectives when the seller is uncertain about demand.}  In this setting, although the regulator can choose any incentive-compatible policy, optimal policies take a simple form: price regulation means uniform taxes or subsidies, while quantity regulation means a quantity ceiling or floor \emph{without} other instruments.  This delivers a unique robust policy and a clean instrument-choice implication: when the sign of this correlation is unknown, quantity regulation is optimal, whereas when the sign is known, price regulation can be optimal.

\section{Framework}
\label{sec:framework}

This section develops the regulator's problem in terms of regulating a positive consumption externality and discusses modeling assumptions.  As \Cref{sec:applications} demonstrates, the model extends straightforwardly to environments with negative externalities, regulatory uncertainty, and costly screening.

\subsection{Model}

There is a unit mass of risk-neutral agents in a market for a homogeneous good.  Consumption of the good generates a positive externality.  Agents differ both in their consumption preferences and in the amount of externality they generate from consumption.  The good is supplied competitively at a constant marginal cost, $c>0$.

Heterogeneity in consumption preferences is captured by a \emph{type}, $\theta\in\Theta\subset\R_+$.  Each agent is privately informed of $\theta$, which determines the utility $u(q,\theta)$ that he derives from consuming a quantity $q$ of the good.  To ensure that consumption is always bounded, I assume that there exists a maximum quantity $A>0$ that each agent can consume, so that $q\in[0,A]$.  As is standard, $u:[0,A]\times\Theta\to\R$ is assumed to be increasing and strictly concave in quantity and to satisfy the single-crossing property: $u_q> 0$, $u_{qq}< 0$, and $u_{q\theta}>0$ in the interior of the domain of $u$.  The type space $\Theta$ is assumed to be compact and convex (\ie, a closed interval of $\R_+$), and types have a continuously differentiable cumulative distribution function $F$ that admits a positive density on $\Theta$.

Heterogeneity in externality is captured by $\xi\in\R_+$, which denotes the marginal benefit (measured in dollar terms) that the agent generates per unit of consumption.  For simplicity, agents are also assumed to be privately informed about $\xi$ (as shown below, whether agents observe $\xi$ does not affect the analysis).  

Each agent's utility depends on both his own consumption of the good and the aggregate externality generated by all agents.  For expositional simplicity, I focus on the case where utility is additively separable and linear in the aggregate externality.  Denote the joint distribution of $(\theta,\xi)$ by $H$.  Thus, a type-$\theta$ agent who consumes a quantity $q(\theta,\xi)$ of the good, generates an externality of $\xi$ per unit of consumption, and makes a payment of $t(\theta,\xi)$ obtains a utility of
\[u(q(\theta,\xi),\theta) - t(\theta,\xi) + E,\quad\text{where }E=\iint_{(\hat\theta,\hat\xi)\in\Theta\times\R_+}\hat\xi q(\hat\theta,\hat\xi)\,\dd H(\hat\theta,\hat\xi).\]

A regulator seeks to regulate the positive externality generated by consumption.  To this end, she chooses a \emph{mechanism} $(q,t)$, which consists of an allocation function $q:\Theta\times\R_+\to[0,A]$ and a payment function $t:\Theta\times\R_+\to\R$.  These respectively determine the quantity of the good that each agent consumes and the payment that he makes.  By the revelation principle \citep{myerson79}, it is without loss of generality for the regulator to consider only incentive-compatible mechanisms under which agents report $(\theta,\xi)$ truthfully, such that for all $(\theta,\xi)\in\Theta\times\R_+$,
\begin{equation}
(\theta,\xi)\in\argmax_{(\hat\theta,\hat\xi)}\left[u(q(\hat\theta,\hat\xi),\theta) - t(\hat\theta,\hat\xi) + E\right]. \tag{IC}\label{eq:IC}
\end{equation}
Notice that individual reports $(\hat\theta,\hat\xi)$ do not affect the aggregate externality $E$ in the market; therefore, as in the standard Pigouvian analysis, each agent fails to internalize how much his own consumption affects others.  Standard arguments from mechanism design imply that $(q,t)$ must be essentially independent of $\xi$: as can be seen from \eqref{eq:IC}, each agent's report of $\xi$ affects his payoff only through the mechanism.\footnote{See, for example, analogous characterization results by \citet{jehielmoldovanu01}, \citet{cheetal13a}, and \citet{dworczaketal21}.}  Consequently, no incentive-compatible mechanism can truthfully elicit information about $\xi$; instead, the regulator must form beliefs about $\xi$ based only on agents' revealed consumption preferences.  Given that incentive-compatible mechanisms can truthfully elicit only information about $\theta$, implementable allocation functions\footnote{Following the mechanism design literature, an allocation function $q:\Theta\times\R_+\to[0,A]$ is said to be {\em implementable} if there exists a payment function $t:\Theta\times\R_+\to\R$ such that the mechanism $(q,t)$ satisfies \eqref{eq:IC}.} must be nondecreasing in $\theta$ due to an adaptation of \citepos{myerson81} lemma:
\begin{lemma}\label{lem:IC}
Define
\[\calQ\coloneq\left\{q:\Theta\to[0,A]\text{ is nondecreasing}\right\}.\]
Then an allocation function $q:\Theta\times\R_+\to[0,A]$ is implementable only if there exists $\hat q\in\calQ$ such that $q(\theta,\xi)=\hat q(\theta)$ for almost every $(\theta,\xi)\in\Theta\times\R_+$.
\end{lemma}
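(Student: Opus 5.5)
The argument has two parts. First, incentive compatibility forces the mechanism to be essentially independent of $\xi$. Second, the usual single-crossing argument shows that the resulting allocation is nondecreasing in $\theta$.

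\emph{Step 1 (independence of $\xi$).} The term $E$ is the same no matter what an agent reports, so \eqref{eq:IC} reduces to: for all $(\theta,\xi)$ and $(\hat\theta,\hat\xi)$,
\[u(q(\theta,\xi),\theta)-t(\theta,\xi)\ge u(q(\hat\theta,\hat\xi),\theta)-t(\hat\theta,\hat\xi).\]
Neither side depends on the agent's true $\xi$. Now fix $\theta$ and apply the inequality twice: once with $(\theta,\xi)$ deviating to $(\theta,\xi')$, and once the other way around. This gives
\[u(q(\theta,\xi),\theta)-t(\theta,\xi)=u(q(\theta,\xi'),\theta)-t(\theta,\xi')\eqqcolon U(\theta).\]
So every report $(\theta,\xi')$ is an optimal choice for a type-$\theta$ agent.

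\emph{Step 2 (monotonicity).} Take $\theta<\theta'$ and arbitrary $\xi,\xi'$. The usual pair of IC inequalities between $(\theta,\xi)$ and $(\theta',\xi')$ adds up to
\[\bigl[u(q(\theta',\xi'),\theta')-u(q(\theta,\xi),\theta')\bigr]-\bigl[u(q(\theta',\xi'),\theta)-u(q(\theta,\xi),\theta)\bigr]\ge 0.\]
Writing this difference as $\int\!\int u_{q\theta}$ over the rectangle between the two quantities and $[\theta,\theta']$, the condition $u_{q\theta}>0$ gives $q(\theta',\xi')\ge q(\theta,\xi)$. The same comparison also covers points on the boundary of the domain, by continuity.

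It follows that $\sup_\xi q(\theta,\xi)\le\inf_{\xi'}q(\theta',\xi')$ whenever $\theta<\theta'$. Define $\hat q(\theta)\coloneq\sup_\xi q(\theta,\xi)$. This function lies in $\calQ$.

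\emph{Step 3 (the a.e. claim).} The remaining and main obstacle is to show that $q(\theta,\cdot)$ is constant for almost every $\theta$. Set $\underline q(\theta)=\inf_\xi q(\theta,\xi)$ and $\overline q(\theta)=\sup_\xi q(\theta,\xi)$. Step 2 shows that the intervals $[\underline q(\theta),\overline q(\theta)]$ have pairwise disjoint interiors as $\theta$ varies, since $\overline q(\theta)\le\underline q(\theta')$ for $\theta<\theta'$. All of these intervals sit inside $[0,A]$. A family of disjoint open subintervals of $[0,A]$ can have only countably many nonempty members, because each one contains a distinct rational. Hence $\underline q(\theta)<\overline q(\theta)$ for at most countably many $\theta$.

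That set of $\theta$ has $F$-measure zero, because $F$ has a density. Off this set we have $q(\theta,\xi)=\hat q(\theta)$ for every $\xi$. Therefore $q=\hat q$ almost everywhere with respect to $H$, and a fortiori for almost every $(\theta,\xi)$, which is what the lemma asserts.

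Step 1 is not strictly needed for the conclusion. It does, however, explain why $\xi$ cannot be screened.
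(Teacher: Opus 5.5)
Your proof is correct, and its first half coincides with the paper's: summing the two IC inequalities between $(\theta,\xi)$ and $(\theta',\xi')$ and using $u_{q\theta}>0$ gives $q(\theta',\xi')\ge q(\theta,\xi)$ whenever $\theta'>\theta$, for arbitrary $\xi,\xi'$. The routes differ in how this cross-$\xi$ monotonicity is turned into the almost-everywhere statement.

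The paper notes that each section $\theta\mapsto q(\theta,\xi)$ is monotone and hence continuous almost everywhere. It then bounds $q(\theta,\xi')\le\lim_{\varepsilon\downarrow 0}q(\theta+\varepsilon,\xi)=q(\theta,\xi)$ and swaps $\xi$ and $\xi'$. You instead show directly that the ranges $[\underline q(\theta),\overline q(\theta)]$ have pairwise disjoint interiors, so only countably many are nondegenerate. In effect you reprove the countable-discontinuities fact for the whole family of sections at once.

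The paper's route is shorter and cites a textbook fact. Yours has the advantage that the exceptional set is visibly countable and independent of $\xi$, so $q(\theta,\cdot)$ is exactly constant off it. In the paper's version the continuity set is a priori $\xi$-dependent. Getting a single null set that works for all (uncountably many) $\xi,\xi'$ needs one more small observation, for instance squeezing from both sides at continuity points of one fixed section $q(\cdot,\xi_0)$. Your argument never has to make it.

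One wording fix: $H$-null does not in general imply null for Lebesgue measure on $\Theta\times\R_+$, so ``a fortiori'' is the wrong justification. Say instead that the exceptional set has the form $N\times\R_+$ with $N$ countable, which is null under both measures. Step 1, as you say, is not needed.
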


\Cref{lem:IC} is proven formally in \Cref{app:proofs} and shows that the formulation---and thus solution---of the regulator's problem does not depend on whether agents observe the amount of externality they generate.  If agents do not observe $\xi$, the regulator can trivially do no better than to elicit only information about $\theta$.  However, even if agents observe $\xi$, \Cref{lem:IC} shows that the regulator can also do no better.  Henceforth, $\calQ$ defined in \Cref{lem:IC} is referred to as the set of implementable allocation functions; and $q$ and $t$ in any incentive-compatible mechanism are written as functions of only $\theta$.  

Given any incentive-compatible mechanism $(q,t)$, the regulator evaluates its performance with respect to the worst-case total surplus.  If the joint distribution $H$ of $(\theta,\xi)$ were known, then the total surplus would be\footnote{As monetary transfers are pure redistributions, the payment function $t$ does not affect total surplus.  Given any implementable allocation $q$, the associated payment function $t$ is pinned down by the envelope theorem up to an additive constant \citep{milgromsegal02}.  Consequently, the analysis below focuses on the choice of the allocation function $q$.}
\[\int_\Theta \left[u(q(\theta),\theta)+\(\E_H[\xi\mid\theta]-c\)q(\theta)\right]\,\dd F(\theta).\]
However, the regulator does not observe $H$ and instead knows only the unconditional mean $\E_H[\xi]=\mu$.  This reflects real-world settings in which policymakers know only the average marginal benefit, but not how marginal benefits are distributed or correlated with consumption.  Below, the conditional mean of $\xi$ is denoted by $m(\theta)\coloneq\E_H[\xi\mid\theta]$.

To evaluate the worst-case total surplus, the regulator considers three informational benchmarks:
\begin{enumerate}[label=\emph{(\roman*)}]
\item \textbf{Unknown correlation.}  The regulator allows all conditional mean functions $m$ in the set
\[\calM_0\coloneq\left\{m:\Theta\to\R_+\text{ such that $m$ is measurable and }\E_F[m(\theta)]=\mu\right\}.\]
This captures settings where the regulator has no reliable information about how marginal benefits covary with consumption.

\item \textbf{Positive correlation.}  The regulator allows all conditional mean functions $m$ in the set
\[\calM_+\coloneq\left\{m:\Theta\to\R_+\text{ such that $m$ is nondecreasing and }\E_F[m(\theta)]=\mu\right\}.\]
Here, higher-demand agents generate larger marginal benefits in expectation.  

\item \textbf{Negative correlation.}  The regulator allows all conditional mean functions $m$ in the set
\[\calM_-\coloneq\left\{m:\Theta\to\R_+\text{ such that $m$ is nonincreasing and }\E_F[m(\theta)]=\mu\right\}.\]
Here, higher-demand agents generate smaller marginal benefits in expectation.  
\end{enumerate}
For each $*\in\{0,+,-\}$, the regulator's problem is therefore
\begin{equation}\label{eq:P}
\adjustlimits\sup_{q\in\calQ}\inf_{m\in\calM_*}\int_\Theta\left[u(q(\theta),\theta)+\left[m(\theta)-c\right]q(\theta)\right]\,\dd F(\theta).\tag{P${}_*$}
\end{equation}
Problem~\eqref{eq:P} can be interpreted as a zero-sum game between the regulator and Nature.  The regulator first chooses an implementable allocation function $q\in\calQ$, after which Nature selects a conditional mean function $m\in\calM_*$ that minimizes total surplus, subject to the constraints of the respective benchmark.

\subsection{Discussion of Assumptions}

To conclude this section, I discuss the key modeling choices underlying the framework and clarify which assumptions are essential for the subsequent analysis and main results.

\paragraph{Information about the externality.}
A key assumption is that the regulator knows the average marginal externality $\mu=\E_H[\xi]$, but not how the marginal benefit is distributed across agents or how it covaries with demand.  This assumption reflects real-world policy settings in which average effects can be estimated from aggregated data, even when individual-level marginal benefit is difficult to observe.  For example, policymakers may know the efficacy of a vaccine and the average number of social interactions individuals engage in, but have far less reliable information about how the marginal social benefits from vaccination vary across individuals or how they are correlated with willingness to pay.

\paragraph{Information about demand.}
The model also assumes that the regulator knows the distribution of types, $F$.  This reflects the fact that consumption behavior is typically easier to observe and measure than external harm: willingness to pay can be estimated from observed consumption choices, whereas marginal social benefits or costs cannot.  Nevertheless,  \Cref{sec:results} shows that the regulator's optimal mechanism for each benchmark requires only a modest amount of information about $F$: it depends on $F$ at most through a one-dimensional sufficient statistic.  This suggests that the regulator's optimal mechanism is \emph{a fortiori} also robust to demand uncertainty.

\paragraph{Demand and supply specifications.}
The assumptions of continuous demand, constant marginal cost, and constant marginal externality are made for ease of exposition.  As \Cref{sec:applications} shows, the subsequent analysis and main results hold in the case of unit demand (\eg, vaccines) and for more general cost and externality functions.  
In addition, while I follow standard textbook analyses by assuming that agent utility is additively separable and linear in the aggregate externality, the subsequent analysis and main results readily generalize.  Specifically, the form of the optimal regulation under each informational benchmark---and hence the comparison between price and quantity regulation---remains the same when agent utility $u(q, \theta, E)$ increases in a potentially nonlinear way with the aggregate positive externality $E$.  Intuitively, even when agent utility is neither additively separable nor linear in the aggregate externality, the aggregate externality implies a shadow cost that potentially varies with policy but does not change its optimal form.

\section{Main Results}
\label{sec:results}

I now solve the regulator's problem under the three informational benchmarks---unknown, positive, and negative correlation---and show how information about the correlation between demand and marginal externality shapes the optimal allocation.  

For ease of notation, let $D(p,\theta)$ denote the demand function of an agent of type $\theta$ when the per-unit price of the good is $p$.  The laissez-faire allocation is thus given by $q^{\LF}(\theta) = D(c,\theta)$.

\subsection{Unknown Correlation}

I begin with the benchmark in which the regulator has no information about how the marginal externality covaries with demand beyond its average level.

\begin{theorem}\label{thm:unknown}
Under the unknown-correlation benchmark, there exists a quantity floor $\und q\in[0,A]$ such that the regulator's unique optimal allocation function is
\[q_0^*(\theta)= \max\left\{q^{\LF}(\theta),\und q\right\}.\]
\end{theorem}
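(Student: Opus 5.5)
The plan is to first solve Nature's inner problem in closed form for a fixed $q\in\calQ$, and then optimize over $q$. Fix $q\in\calQ$. The objective is affine in $m$, and the only constraints on $m\in\calM_0$ are $m\ge 0$ and $\E_F[m]=\mu$. Nature therefore wants to place all the mass of $m$ where $q$ is smallest. Because $q$ is nondecreasing, its essential infimum is $q(\underline\theta)$, the value at the bottom type; if $q$ jumps there, I take the right limit $q(\underline\theta^+)$. Choosing $m$ concentrated on a shrinking interval $[\underline\theta,\underline\theta+\varepsilon]$ with total mass $\mu$ gives
\[
\inf_{m\in\calM_0}\int_\Theta m\,q\,\dd F \;=\; \mu\,\underline q(q),\qquad \underline q(q)\coloneq\lim_{\theta\downarrow\underline\theta}q(\theta),
\]
and the lower bound $\int mq\,\dd F\ge \underline q(q)\int m\,\dd F$ shows this value is exact. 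So the worst-case welfare of $q$ is
\[
W(q)=\int_\Theta\bigl[u(q(\theta),\theta)-c\,q(\theta)\bigr]\,\dd F(\theta)+\mu\,\underline q(q).
\]
The infimum need not be attained, but the value is clean.

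Next I reduce the problem to a one-dimensional choice. Fix the bottom level $\underline q(q)=k\in[0,A]$. Monotonicity then requires $q(\theta)\ge k$ for all $\theta$. Maximizing $\int[u(q,\theta)-cq]\,\dd F$ subject to $q\ge k$ can be done pointwise. Because $u$ is strictly concave in $q$, the unique pointwise maximizer is $\max\{D(c,\theta),k\}=\max\{q^{\LF}(\theta),k\}$. This function is nondecreasing, since $q^{\LF}$ is nondecreasing by single crossing, so it lies in $\calQ$. Its bottom value is $\max\{q^{\LF}(\underline\theta),k\}$. If $k<q^{\LF}(\underline\theta)$, then the policy equals $q^{\LF}$, and its bottom value exceeds $k$. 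This only helps, so it is without loss to index candidates by the floor $k$ itself. The problem then reduces to maximizing over $k\in[0,A]$
\[
V(k)=\int_\Theta\bigl[u(\max\{q^{\LF}(\theta),k\},\theta)-c\max\{q^{\LF}(\theta),k\}\bigr]\,\dd F(\theta)+\mu\max\{q^{\LF}(\underline\theta),k\}.
\]
Any optimal $q$ must coincide almost everywhere with $\max\{q^{\LF},k\}$ for its own $k$. Otherwise the pointwise improvement is strict on a set of positive measure, by strict concavity and the positive density of $F$.

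Finally I establish existence and uniqueness of the optimal floor $\und q$. $V$ is continuous on the compact set $[0,A]$, so a maximizer exists. On $k\ge q^{\LF}(\underline\theta)$, the first term is concave in $k$: each integrand $g_\theta(k)=u(\max\{q^{\LF},k\},\theta)-c\max\{q^{\LF},k\}$ is constant up to $q^{\LF}(\theta)$ and concave and decreasing afterwards, so it is concave. Its derivative is $\int_{\{q^{\LF}(\theta)<k\}}[u_q(k,\theta)-c]\,\dd F$, which is strictly decreasing once $k>q^{\LF}(\underline\theta)$, because the set has positive measure and $u_{qq}<0$. Adding the linear term $\mu k$ keeps $V$ strictly concave there. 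On $k<q^{\LF}(\underline\theta)$, $V$ is constant and equals $V(q^{\LF}(\underline\theta))$, so those $k$ all give the same allocation $q^{\LF}$. The maximizer of $V$ over $[q^{\LF}(\underline\theta),A]$ is therefore unique. Call it $\und q$; taking $\und q=0$ is equivalent when the optimum is laissez-faire.

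The main obstacle is the uniqueness claim "unique optimal allocation function." It must be read as uniqueness up to measure-zero sets and the boundary value at $\underline\theta$. I will have to argue carefully that every optimal $q\in\calQ$ is the pointwise maximizer for its own bottom value $k$, and that $k$ must equal $\und q$. This relies on the strict concavity above, together with handling the boundary case where $q^{\LF}(\underline\theta)$ is itself optimal.
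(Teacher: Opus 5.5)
Your proposal is correct and follows essentially the same route as the paper. In both, Nature concentrates the mean on a vanishing set of lowest types, so the worst case depends on $q$ only through its essential infimum, and the problem reduces to choosing a floor with laissez-faire consumption above it. Your extra step (continuity of $V$ on $[0,A]$ and its strict concavity on $[q^{\LF}(\und\theta),A]$) also makes explicit the existence of the optimal floor and the uniqueness argument, which the paper compresses into a single appeal to strict concavity of the objective in $q$.
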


\Cref{thm:unknown} shows that the optimal regulation has two important features that allow it to be interpreted as a form of quantity regulation.  First, it imposes a quantity floor: no agent can consume less than $\und q$ units of the good.  Second, it imposes no intervention above that floor: agents for whom the quantity floor does not bind simply consume their laissez-faire allocation.  

The first feature is intuitive: a quantity floor maximally lower-bounds the amount of external benefit that any single agent can generate, subject to incentive compatibility.  Under the unknown correlation benchmark, the regulator must guard against the possibility that marginal benefit is concentrated among low types.  At the same time, incentive compatibility implies that any implementable allocation must be nondecreasing in type (\Cref{lem:IC}).  As such, to maximally expand consumption (and hence externality generated) among the lowest types while respecting incentive compatibility, the regulator can do no better than to flatten the allocation function below some threshold by imposing a quantity floor.

The second feature is perhaps more surprising: any intervention above the quantity floor can hurt the regulator more than it can help.  Notably, this feature does not hold in Bayesian models: as \citet{kaplowshavell02} have observed, a general lesson of the Bayesian mechanism design literature is that nonlinear pricing can---and except in special cases, strictly---improve upon quantity regulations such as a quantity floor.  By contrast, under the unknown-correlation benchmark, the regulator must instead guard against the possibility that intervention unnecessarily distorts consumption if high types turn out not to generate much marginal benefit.  This force dominates above the quantity floor, meaning that laissez-faire consumption is optimal for these agents.

The proof of \Cref{thm:unknown} formalizes these intuitions:  

\begin{proof}
To begin, observe that Nature's problem is to choose a conditional mean function $m\in\calM_0$ to minimize external benefit, given the regulator's choice of allocation function $q$:
\[\inf_{m\in\calM_0} \E_F[m(\theta)q(\theta)]\geq \E_F[m(\theta)]\cdot\essinf_{\theta\in\Theta} q(\theta).\]
To show that equality is in fact attained, consider the sequence $\{m_n\}_{n=1}^\infty\subset\calM_0$ defined by
\[m_n(\theta)\coloneq \begin{dcases}
0, &\text{if } F(\theta)> 1/n,\\
n\mu, &\text{otherwise}.
\end{dcases}\]
Given the regulator's choice of allocation function $q$, for any $\varepsilon>0$, $\Pr_F[q(\theta)<\essinf_{\theta\in\Theta}q(\theta)+\e]>0$ by the definition of the essential infimum.  Consequently, there exists $N_\e\in\NN$ such that
\[\Pr_F\!\left[q(\theta)<\essinf_{\theta\in\Theta}q(\theta)+\e\right]\geq 1/N_\e.\]
Since $q$ is nondecreasing by \Cref{lem:IC}, this choice of $N_\e$ yields
\[\E_F[m_{N_\e}(\theta)q(\theta)] \leq \mu\left[\essinf_{\theta\in\Theta}q(\theta)+\e\right].\] 
In turn, this implies that
\begin{equation}\label{eq:BR}
\lim_{\e\to 0}\E_F[m_{N_\e}(\theta)q(\theta)]=\mu\essinf_{\theta\in\Theta}q(\theta)\implies \inf_{m\in\calM_0}\E_F[m(\theta)q(\theta)] = \mu\essinf_{\theta\in\Theta}q(\theta).
\end{equation}
This shows that Nature's best response is to concentrate marginal benefit on low types, in line with the intuition given above.\footnote{Technically, Nature does not always have a best response as the worst-case external benefit is generally attained only in the limit.  However, a best response exists for the optimal allocation function chosen by the regulator.}

Importantly, the worst-case external benefit in \cref{eq:BR} depends on the regulator's choice of $q$ only through its essential infimum, $\essinf_{\theta\in\Theta}q(\theta)$.  Thus, the regulator's problem (\hyperref[eq:P]{P${}_0$}) can be rewritten as 
\begin{align}
&\sup_{q\in\calQ}\left[\int_\Theta \left[u(q(\theta),\theta)-cq(\theta)\right]\,\dd F(\theta) + \mu\essinf_{\theta\in\Theta}q(\theta)\right]\tag*{}\\
&=\sup_{\und q\in[0,A]}\left[\sup_{q\in\calQ}\left\{\int_\Theta \left[u(q(\theta),\theta)-cq(\theta)\right]\,\dd F(\theta):\essinf_{\theta\in\Theta}q(\theta) = \und q\right\} + \mu \und q\right].\label{eq:reg}
\end{align}
For any choice of $\und q\geq q^{\LF}(\und\theta)$, the solution to the inner constrained problem in \cref{eq:reg} is given by the pointwise maximizer $q^{\LF}(\theta)$ unless the constraint is binding, yielding the desired solution form
\[q_0^*(\theta)=\max\left\{q^{\LF}(\theta),\und q\right\}.\]
This solution form remains the same if $\und q<q^{\LF}(\und\theta)$.  This solution is also unique since $u$ is strictly concave in $q$, which means that the regulator's objective function above is strictly concave in $q$.  
\end{proof}

The proof of \Cref{thm:unknown} shows that the quantity floor in the optimal regulation can be characterized by a simple first-order condition whenever it is interior.  Specifically, the solution to the outer problem in \cref{eq:reg} satisfies
\begin{equation}\label{eq:FOC}
\E_F\!\left[\left[c-u_q(\und q,\theta)\right]\bone_{q^{\LF}(\theta)<\und q}\right]=\mu.
\end{equation}
In words, the optimal quantity floor equates the marginal gain in allocative efficiency from lowering the quantity floor---due to agents whose laissez-faire consumption falls below $\und q$---to the average marginal external benefit. 

The key technical insight exploited in the proof is that Nature's limiting best response results in a worst-case externality that depends on the regulator's allocation function only through a one-dimensional statistic: its essential infimum.  This allows the regulator's problem to be rewritten as a maximization problem with a single constraint on $\essinf_{\theta\in\Theta}q(\theta)$, rather than an infinite-dimensional constraint that would arise if Nature's best response depended on the entire allocation function.

An implication of \Cref{thm:unknown} is that the optimal regulation places relatively limited informational requirements on the distribution of types.  Unlike Bayesian mechanism design models, which typically require estimation of the entire distribution $F$, the optimal regulation here depends on $F$ only through the condition in \cref{eq:FOC}.  Thus, implementing the optimal regulation requires only information about a welfare statistic---specifically, how much allocative efficiency is lost by imposing the quantity floor---rather than detailed knowledge of the full demand distribution. This property emerges as a consequence of robustness to correlation uncertainty, rather than an \emph{a priori} restriction imposed on the regulator.

While \Cref{thm:unknown} uses only information about the average marginal benefit $\mu$, the same analytical techniques extend to ambiguity sets that impose additional structure on the distribution of $\xi$.  To illustrate, suppose the regulator also knows that $\xi\in[0,\BAR\xi]$ for some $\BAR\xi\ge \mu$.  This restriction implies $0\leq m(\theta)\leq \BAR\xi$; since $\E_F[m(\theta)]=\mu$, any feasible $m$ must be positive on a set of $F$-measure at least $\a\coloneq\mu/\BAR\xi$, so Nature can no longer concentrate the entire mean $\mu$ on a vanishing set of types.  Moreover, to minimize $\E_F[m(\theta)q(\theta)]$ subject to these constraints, Nature assigns $m(\theta)=\BAR\xi$ on an $\a$-measure set of types with the smallest allocation probabilities and $m(\theta)=0$ elsewhere.  As a result, the essential infimum term in \cref{eq:BR} is replaced by a lower-tail expected shortfall functional:
\[\inf_{0\leq m\leq \BAR\xi} \left\{\E_F\!\left[m(\theta)q(\theta)\right]:\E_F[m(\theta)]=\mu\right\}=\BAR\xi\int_{0}^{\a} q\!\(F^{-1}(u)\)\,\dd u=\mu\cdot \frac{1}{\alpha}\int_{0}^{\a} q\!\left(F^{-1}(u)\right)\,\dd u.\]
Accordingly, the objective still depends on $q$ only through a one-dimensional functional, and the resulting mechanism design problem can be analyzed using the same techniques as in the proof of \Cref{thm:unknown}.  Similar arguments apply when the regulator has additional moment information (e.g., a variance bound) rather than a support bound on $\xi$.  

Although these extensions are analytically tractable, I view them as more artificial from a policy perspective.  In many real-world applications, policymakers lack credible information about the \emph{shape} of the externality distribution---especially its tails---beyond coarse evidence about an average effect.  In the exceptional cases where such information \emph{is} available, it is arguably more natural to take a Bayesian approach.  Nevertheless, when these additional restrictions are loose---so substantial residual uncertainty about heterogeneity remains (e.g., $\BAR\xi$ is large or moment bounds are weak)---it can be shown that the optimal policies are well-approximated by the quantity-control characterization in \Cref{thm:unknown}.  For these reasons, I do not pursue these extensions here.

\subsection{Positive vs.~Negative Correlation}

I next consider the benchmark in which the regulator knows the sign of the correlation between demand and marginal externality.

\begin{theorem}\label{thm:sign}\hfill
\begin{enumerate}[label=\emph{(\alph*)}]
\item Under the positive-correlation benchmark, the regulator's unique optimal allocation function can be implemented by a per-unit subsidy equal to $\mu$:
\[q_+^*(\theta)=D(c-\mu,\theta).\]

\item Under the negative-correlation benchmark, there exists a quantity floor $\und q\in[0,A]$ such that the regulator's unique optimal allocation function is
\[q_-^*(\theta)= q_0^*(\theta)=\max\left\{q^{\LF}(\theta),\und q\right\}.\]
\end{enumerate}
\end{theorem}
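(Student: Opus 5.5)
As in the proof of \Cref{thm:unknown}, I would first solve Nature's inner problem for fixed $q\in\calQ$, then optimize over $q$.

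\textbf{Part (a).} For $m\in\calM_+$, both $m$ and $q$ are nondecreasing, so Chebyshev's association (sum) inequality gives
\[\E_F[m(\theta)q(\theta)]\ge \E_F[m(\theta)]\,\E_F[q(\theta)]=\mu\E_F[q(\theta)].\]
The constant function $m\equiv\mu$ lies in $\calM_+$ and attains this bound. Hence Nature's best response is $m\equiv\mu$ for every $q$, and (\hyperref[eq:P]{P${}_+$}) reduces to
\[\sup_{q\in\calQ}\int_\Theta\left[u(q(\theta),\theta)-(c-\mu)q(\theta)\right]\dd F(\theta).\]
I would maximize this pointwise. The maximizer is $D(c-\mu,\theta)$, which is nondecreasing in $\theta$ by single crossing ($u_{q\theta}>0$), so the monotonicity constraint is slack and $D(c-\mu,\cdot)\in\calQ$. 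Uniqueness (up to null sets) follows from strict concavity of $u$ in $q$. A per-unit subsidy $\mu$ then implements it, since agents face price $c-\mu$.

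\textbf{Part (b).} For $m\in\calM_-$ I would reuse the sequence $m_n$ from the proof of \Cref{thm:unknown}. Each $m_n$ equals $n\mu$ on $\{F(\theta)\le 1/n\}$ and $0$ elsewhere, so it is nonincreasing and lies in $\calM_-$. Moreover $\calM_-\subset\calM_0$. Therefore
\[\mu\,\essinf q\le\inf_{\calM_0}\E_F[mq]\le\inf_{\calM_-}\E_F[mq]\le\lim_n\E_F[m_nq]=\mu\,\essinf q.\]
The limit step uses the same argument as \cref{eq:BR}, which relies only on $q$ being nondecreasing, so that $\{F(\theta)\le 1/n\}$ captures the types with the lowest allocations. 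Consequently the worst-case objective under $\calM_-$ coincides with that under $\calM_0$ for every $q\in\calQ$, so (\hyperref[eq:P]{P${}_-$}) and (\hyperref[eq:P]{P${}_0$}) are the same problem. \Cref{thm:unknown} then delivers the optimal allocation, with the same floor $\und q$ and the same uniqueness.

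\textbf{Main obstacle.} The only delicate point is the Chebyshev inequality in (a) for merely measurable monotone functions under $F$. I would prove it directly from
\[\iint\bigl(m(\theta)-m(\theta')\bigr)\bigl(q(\theta)-q(\theta')\bigr)\,\dd F(\theta)\,\dd F(\theta')\ge0.\]
The integrand is nonnegative pointwise, because $m$ and $q$ are comonotone. Integrability holds because $q$ is bounded by $A$ and $m\ge 0$ has mean $\mu$.
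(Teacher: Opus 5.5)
Your proposal is correct and follows essentially the same route as the paper. In part (a), Chebyshev's inequality with Nature's best response $m\equiv\mu$ reduces the problem to pointwise maximization at price $c-\mu$; in part (b), the sequence $m_n$ from the proof of \Cref{thm:unknown} lies in $\calM_-$, so the negative-correlation problem coincides with the unknown-correlation one. Your added checks fill in details the paper leaves implicit: that $D(c-\mu,\cdot)$ is nondecreasing and hence lies in $\calQ$, the direct double-integral proof of Chebyshev's inequality, and the explicit sandwich argument.
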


\Cref{thm:sign} shows that the sign of the correlation determines whether price or quantity regulation is optimal.  When the correlation is positive, the optimal regulation is a uniform subsidy equal to the average marginal benefit.  When the correlation is negative, the optimal regulation consists of a quantity floor.

The negative-correlation benchmark parallels the unknown-correlation benchmark in \Cref{thm:unknown}.  In particular, the regulator must guard against the same worst-case possibility---namely, that marginal benefit is concentrated among agents with low demand.  Consequently, the optimal allocation coincides with that in \Cref{thm:unknown} and again takes the form of a quantity floor above which there is no intervention.

The positive-correlation benchmark is qualitatively different.  Nature can no longer concentrate high externalities on low-demand agents, and as the proof of \Cref{thm:sign} below shows, Nature's best response is degenerate: $m(\theta)\equiv \mu$, which removes selection altogether.  The regulator therefore faces an effectively homogeneous externality of size $\mu$, and the optimal policy collapses to a uniform subsidy.

Perhaps interestingly, the optimal regulation in the positive-correlation benchmark sets a subsidy equal to the average marginal benefit, $\mu$.  In Bayesian models, this is \emph{not} the optimal uniform subsidy.  In fact, \citet{diamond73} and subsequent authors caution against this ``na\"ive'' subsidy, showing that policymakers can generally do better by setting a subsidy equal to a weighted average of marginal benefits, with weights given by demand responses.  By contrast, \Cref{thm:sign} provides a robustness-based argument for why this ``na\"ive'' subsidy can be optimal when the sign of the correlation is known but more detailed information about its structure is unavailable.

It should be noted that \emph{no} information about the distribution of types is required under the positive-correlation benchmark.  Thus, the optimal regulation in this case is also robust to demand uncertainty.

The proof of \Cref{thm:sign} shows why the optimal regulation in the positive-correlation benchmark is simply a uniform subsidy equal to $\mu$:

\begin{proof}
To prove the result for the positive-correlation benchmark,\footnote{As noted above, the result for the negative-correlation benchmark follows directly from the proof of \Cref{thm:unknown}.} observe that the worst-case external benefit can be bounded more tightly than in \Cref{thm:unknown}.  Given the regulator's choice of allocation function $q$,
\[\inf_{m\in\calM_+} \E_F[m(\theta)q(\theta)] \geq \E_F[m(\theta)]\cdot\E_F[q(\theta)]=\mu\E_F[q(\theta)].\]
The inequality above follows by the continuous version of Chebyshev's sum inequality.  To show that equality is in fact attained, consider Nature's best response: $m_+^*(\theta)\equiv \mu$.  Given this strategy,
\[\E_F[m_+^*(\theta)q(\theta)] = \mu \E_F[q(\theta)] = \inf_{m\in\calM_+}\E_F[m(\theta)q(\theta)].\]
This shows that Nature's best response is weakly dominant and distributes marginal benefit equally.

As in the proof of \Cref{thm:unknown}, the worst-case external benefit under the positive-correlation benchmark depends on the regulator's choice of $q$ only through a one-dimensional statistic: in this case, its average, $\E_F[q(\theta)]$.  Thus, the regulator's problem (\hyperref[eq:P]{P${}_+$}) can be rewritten as 
\[\sup_{q\in\calQ}\left[\int_\Theta \left[u(q(\theta),\theta)-cq(\theta)\right]\,\dd F(\theta) + \mu\E_F[q(\theta)]\right]=\sup_{q\in\calQ}\int_\Theta \left[u(q(\theta),\theta)-\(c-\mu\)q(\theta)\right]\,\dd F(\theta).\]
The solution to the regulator's problem is therefore given by the pointwise maximizer $q_+^*(\theta) = D(c-\mu,\theta)$.  As in the proof of \Cref{thm:unknown}, uniqueness of the solution follows from the strict concavity of $u$.
\end{proof}

\section{Applications}
\label{sec:applications}

This section applies the main results of \Cref{sec:results} to four settings: vaccines (as in the motivating example of \Cref{sec:introduction}), negative externalities, regulatory uncertainty, and costly screening.  Proofs are deferred to \Cref{app:proofs}.

\subsection{Vaccines}\label{sec:vaccines}

To apply the main results to vaccines, I extend the framework of \Cref{sec:framework} to a setting with unit demand.  Each agent demands at most one unit of the vaccine, and willingness to pay is denoted by $\theta$.  Interpreting $q(\theta)$ as the probability that an agent of type $\theta$ is vaccinated, the regulator chooses an allocation function $q:\Theta\to[0,1]$.  As \Cref{prop:vaccines} shows below, probabilistic rationing does not arise in the regulator's optimal mechanism; however, allowing for probabilistic allocations implies that the chosen mechanism is also optimal within a broader class.  

Agent utility is given by $u(q,\theta) = \theta q$.  This specification can be viewed as a limiting special case of the utility functions permitted in \Cref{sec:framework}. In particular, although $u$ is not strictly concave in $q$, it is the pointwise limit of the strictly concave family $u_n(q,\theta)=\theta q-q^2/n$ as $n\to+\infty$.

Vaccination generates an external benefit $\xi\in\R_+$ that varies across agents. As motivated in \Cref{sec:introduction}, this may reflect heterogeneity in individuals' propensity to wear masks or avoid large crowds.  In turn, $\xi$ may be positively or negatively correlated with $\theta$, depending on whether willingness to pay is associated with greater exposure or greater precaution. 

The following proposition extends \Cref{thm:unknown,thm:sign} to this setting.  As before, the vaccine is supplied competitively at a constant marginal cost $c$, and the regulator maximizes worst-case total surplus subject to the three informational benchmarks.

\begin{proposition}\label{prop:vaccines}
\hfill
\begin{enumerate}[label=\emph{(\alph*)}]
\item Under the unknown-correlation benchmark and the negative-correlation benchmark, the regulator's optimal allocation function is
\[q^*(\theta)=\begin{dcases}
1, &\text{if }\mu \geq \E_F[\(c-\theta\)_+],\\
q^{\LF}(\theta)=\bone_{\theta>c}, &\text{if }\mu < \E_F[\(c-\theta\)_+].
\end{dcases}\]
\item Under the positive-correlation benchmark, the regulator's optimal allocation function is
\[q^*(\theta) = \bone_{\theta>c-\mu}.\]
\end{enumerate}
\end{proposition}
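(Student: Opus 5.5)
I would redo the proofs of \Cref{thm:unknown,thm:sign} directly with $u(q,\theta)=\theta q$ and $q:\Theta\to[0,1]$ nondecreasing, rather than take limits of the strictly concave family $u_n$. The arguments reducing Nature's problem never used concavity, so they carry over verbatim.

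\textbf{Nature's problem.} Incentive compatibility still forces $q$ to be nondecreasing (\Cref{lem:IC}). Nature's side is unchanged:
\begin{itemize}
\item Under $\calM_0$ the worst-case external benefit is $\mu\essinf_\theta q(\theta)$, using the same sequence $m_n$ concentrating mass on the lowest types.
\item Under $\calM_-$ the same holds, since each $m_n$ is nonincreasing and hence lies in $\calM_-$, while the lower bound $\mu\essinf q$ is valid for any nonnegative $m$ with mean $\mu$.
\item Under $\calM_+$ the Chebyshev argument gives $\mu\E_F[q]$, attained by $m\equiv\mu$.
\end{itemize}

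\textbf{Part (b).} Under $\calM_+$ the regulator maximizes $\E_F[(\theta-c+\mu)q(\theta)]$ over nondecreasing $q$ with values in $[0,1]$. The pointwise maximizer $\bone_{\theta>c-\mu}$ is nondecreasing and therefore optimal. It is unique up to the null set $\{\theta=c-\mu\}$.

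\textbf{Part (a).} Fix a floor $\und q=\essinf q\in[0,1]$. The inner problem maximizes $\E_F[(\theta-c)q(\theta)]$ subject to $q\ge \und q$ a.e. and $q$ nondecreasing. The pointwise solution is $q=\und q$ for $\theta<c$ and $q=1$ for $\theta>c$, which is $\max\{\bone_{\theta>c},\und q\}$ and is monotone. Its value is
\[
\E_F[(\theta-c)_+]-\und q\,\E_F[(c-\theta)_+].
\]
The outer objective is therefore affine in $\und q$:
\[
\E_F[(\theta-c)_+]+\und q\left(\mu-\E_F[(c-\theta)_+]\right).
\]
So the optimum is at a corner:
\begin{itemize}
\item If $\mu\ge\E_F[(c-\theta)_+]$, take $\und q=1$, giving $q^*\equiv1$.
\item Otherwise take $\und q=0$, giving $q^{\LF}=\bone_{\theta>c}$.
\end{itemize}
This corner structure is why no probabilistic rationing appears. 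It also recovers the first-order condition \eqref{eq:FOC}, since $u_q=\theta$.

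\textbf{Main obstacle.} The delicate step is justifying the decomposition over $\und q$ without strict concavity. I need to check that the pointwise maximizer under the floor constraint is indeed feasible, i.e.\ monotone, which holds here. Uniqueness is also lost: it fails at the tie $\mu=\E_F[(c-\theta)_+]$ and on null sets. This is consistent with the proposition saying ``optimal'' rather than ``unique optimal.'' I would simply note these ties rather than invoke the limiting family $u_n$.
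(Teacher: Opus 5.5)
Your proposal is correct and follows essentially the same route as the paper's direct appendix proof. Both arguments reduce Nature's problem to $\mu\essinf_{\theta\in\Theta}q(\theta)$ (noting that each $m_n$ lies in $\calM_-$) or, via Chebyshev, to $\mu\E_F[q(\theta)]$; both then decompose over the floor $\und q$ and obtain the affine outer objective $\E_F[(\theta-c)_+]+\und q\,(\mu-\E_F[(c-\theta)_+])$, whose corner solution gives the result, with uniqueness failing only at the tie $\mu=\E_F[(c-\theta)_+]$, as the paper also notes.
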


\Cref{prop:vaccines} shows that there are only three candidates for the regulator's optimal policy: a mandate ($q^*\equiv 1$), no intervention ($q^*=q^{\LF}$), and a subsidy equal to the average external benefit $\mu$.\footnote{Here, the terms ``mandate'' and ``subsidy'' refer only to the induced allocation rule in a frictionless total-surplus benchmark.  In practice, mandates may raise enforcement and exemption issues, and subsidies may require financing through distortionary taxes.  These institutional margins are not modeled here as the goal of the application is to isolate how limited information about heterogeneity in external benefits can rationalize different allocation rules.}  A subsidy is optimal only when willingness to pay is known to be positively correlated with the external benefit---for instance, when individuals at greater risk of exposure also have higher demand for vaccination.  This direction of selection is plausible for vaccines such as those for HPV and mpox, where individuals may seek vaccination in anticipation of increased infection risk.  By contrast, if the direction of selection is negative or ambiguous---such as when individuals with higher demand for vaccination also engage in more precautionary behavior, as may have occurred during the COVID-19 pandemic---then the optimal policy is either a mandate or no intervention.  

The regulator's choice between a mandate and no intervention is determined by a comparison between the average allocative deadweight loss imposed by a mandate and the average external benefit from vaccination.  Vaccinating an agent with willingness to pay $\theta$ imposes an allocative deadweight loss of $\(c-\theta\)_+$; thus, $\E_F[\(c-\theta\)_+]$ represents the average loss in allocative efficiency generated by a mandate.  When the average external benefit $\mu$ exceeds this loss, mandating vaccination yields a positive welfare guarantee and is optimal.  However, when $\mu$ is smaller than $\E_F[\(c-\theta\)_+]$, the regulator optimally refrains from intervention as \emph{any} intervention would generate more allocative inefficiency than external benefit in the worst case.  This condition extends the logic of the tradeoff characterized in \cref{eq:FOC}.

\Cref{prop:vaccines} implies that mandates are optimal in a wider set of environments than a Bayesian model would suggest.  If the true joint distribution $H$ of $(\theta,\xi)$ were known, the regulator would be able to exploit information about the conditional average external benefit $m(\theta)=\E_H[\xi\cond\theta]$.  It can be verified that a Bayesian regulator (who maximizes expected total surplus) optimally employs a mandate if and only if for \emph{every} $\hat\theta\in[\und\theta,\BAR\theta]$,
\begin{equation}\label{eq:mandate}
\int_{\und\theta}^{\hat\theta}\left[\theta-c+m(\theta)\right]\,\dd F(\theta)\geq 0.
\end{equation}
This condition is strictly stronger than that in \Cref{prop:vaccines}: taking $\hat\theta=c$, \cref{eq:mandate} implies that
\[\E_F[\(c-\theta\)_+]\leq\E_F[m(\theta)\bone_{\theta\leq c}] \leq \E_F[m(\theta)] = \mu.\]
The reverse implication does not generally hold. Consequently, the robust criterion used here rationalizes mandates in environments where a Bayesian model---requiring substantially more information---would not.  This may help explain why vaccine mandates have been so widely adopted, including during the COVID-19 pandemic.

Perhaps surprisingly, \Cref{prop:vaccines} can also rationalize no intervention.  With a known joint distribution, a Bayesian regulator subsidizes vaccination whenever expanding participation below $c$ yields positive expected social surplus for some types: that is, whenever $\theta-c+m(\theta)>0$ on a set of positive measure for $\theta\leq c$.   By contrast, under the robust criterion considered here, limited information about how external benefits are distributed across agents can make any intervention welfare-reducing in the worst case. In such environments, refraining from intervention maximizes the regulator's welfare guarantee despite the presence of a positive externality.

Finally, the proof of \Cref{prop:vaccines} in \Cref{app:proofs} shows that, despite agent utility no longer being strictly concave as in \Cref{sec:results}, the optimal allocation functions characterized above are unique except in a knife-edge case.  Specifically, when $\mu=\E_F[\(c-\theta\)_+]$, the regulator is indifferent across a continuum of optimal allocations that include both a mandate and no intervention; but in all other cases, the optimal allocation function is unique.

\subsection{Negative Externalities}\label{sec:negative}

Although I have so far focused on a framework with positive externalities, my analysis extends to the case of negative externalities.  Specifically, suppose that a type-$\theta$ agent who consumes a quantity $q(\theta,\xi)$ of the good, generates a negative externality of $\xi$ per unit of consumption, and makes a payment of $t(\theta,\xi)$ obtains a utility of
\[u(q(\theta,\xi),\theta) - t(\theta,\xi) - E,\quad\text{where }E=\iint_{(\hat\theta,\hat\xi)\in\Theta\times\R_+}\hat\xi q(\hat\theta,\hat\xi)\,\dd H(\hat\theta,\hat\xi).\]

Relative to the results of \Cref{sec:results}, quantity ceilings---rather than quantity floors---become optimal when there are negative externalities.  The following result extends \Cref{thm:unknown}:
\begin{retheorem}\label{rethm:unknown}
Under the unknown-correlation benchmark for negative externalities, there exists a quantity ceiling $\BAR q\in[0,A]$ such that the regulator's unique optimal allocation function is
\[q_0^*(\theta)= \min\left\{q^{\LF}(\theta),\BAR q\right\}.\]
\end{retheorem}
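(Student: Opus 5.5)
I would mirror the proof of \Cref{thm:unknown}, with the sign of the externality term flipped. For negative externalities the regulator's problem under the unknown-correlation benchmark is
\[\sup_{q\in\calQ}\inf_{m\in\calM_0}\int_\Theta\left[u(q(\theta),\theta)-\left[m(\theta)+c\right]q(\theta)\right]\,\dd F(\theta),\]
so Nature now wants to \emph{maximize} the external harm $\E_F[m(\theta)q(\theta)]$. \Cref{lem:IC} applies unchanged, because the reported $\xi$ still affects the agent's payoff only through the mechanism. Hence $q$ may be taken nondecreasing in $\theta$ alone.

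The first step is Nature's best response. The trivial bound is $\sup_{m\in\calM_0}\E_F[m q]\le \mu\,\esssup_{\theta} q(\theta)$. To show this bound is attained in the limit, I concentrate mass on the top types rather than the bottom ones:
\[m_n(\theta)=n\mu\,\bone_{F(\theta)\ge 1-1/n}.\]
For any $\e>0$, the set $\{q>\esssup q-\e\}$ has positive $F$-measure. Since $q$ is nondecreasing, this set is an upper interval, so it contains the top $1/N_\e$ quantile for some $N_\e$. This gives $\E_F[m_{N_\e}q]\ge\mu(\esssup q-\e)$. Letting $\e\to0$ yields
\[\sup_{m\in\calM_0}\E_F[mq]=\mu\,\esssup_\theta q(\theta).\]

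The second step is the reduction to a one-dimensional problem. Substituting the worst case, the problem becomes
\[\sup_{\BAR q\in[0,A]}\Big[\sup_{q\in\calQ}\Big\{\int_\Theta[u(q,\theta)-cq]\,\dd F:\esssup q=\BAR q\Big\}-\mu\BAR q\Big].\]
It is convenient to relax the inner constraint to $\esssup q\le\BAR q$. This is harmless because the outer term $-\mu\BAR q$ penalizes any slack, so the relaxed and original problems have the same value. For fixed $\BAR q$, the integrand $u(q,\theta)-cq$ is strictly concave in $q$ and maximized pointwise at $q^{\LF}(\theta)=D(c,\theta)$. Under the cap, the pointwise maximizer is therefore $\min\{q^{\LF}(\theta),\BAR q\}$. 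This function is nondecreasing, since $q^{\LF}$ is nondecreasing by single crossing, so it lies in $\calQ$.

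The outer problem in $\BAR q$ is a continuous function on the compact set $[0,A]$, so a maximizer $\BAR q$ exists. Its first-order condition, when interior, is
\[\E_F\!\left[[u_q(\BAR q,\theta)-c]\bone_{q^{\LF}(\theta)>\BAR q}\right]=\mu.\]
Uniqueness of the optimal allocation follows as before from strict concavity of $u$: the objective $\int[u-cq]\,\dd F-\mu\,\esssup q$ is strictly concave in $q$, because $\esssup$ is convex and enters with a negative sign.

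The main obstacle is the limiting construction for Nature. It requires the top-tail sets to capture values of $q$ near the essential supremum, and this is exactly where monotonicity of $q$ from \Cref{lem:IC} is needed, just as in the floor case. The remaining steps are routine once the worst case is expressed through $\esssup q$.
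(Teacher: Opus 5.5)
Your proposal is correct and follows essentially the same route as the paper's proof: the top-tail sequence $m_n$ gives worst-case harm $\mu\,\esssup_{\theta} q(\theta)$ via the monotonicity from \Cref{lem:IC}, the problem reduces to a one-dimensional choice of the cap with inner pointwise solution $\min\{q^{\LF}(\theta),\BAR q\}$, and uniqueness follows from strict concavity. Three of your additions tidy up points the paper handles only informally: relaxing the inner constraint to $\esssup q\le\BAR q$ (which avoids the loose ``solution form remains the same if $\BAR q>q^{\LF}(\BAR\theta)$'' step), the compactness argument for existence of $\BAR q$, and the observation that $-\mu\,\esssup q$ is concave.
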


Moreover, the results for positive and negative correlation are reversed. The following result extends \Cref{thm:sign} to the case of negative externalities:

\begin{retheorem}\label{rethm:sign}
\hfill
\begin{enumerate}[label=\emph{(\alph*)}]
\item Under the positive-correlation benchmark for negative externalities, there exists a quantity ceiling $\BAR q\in[0,A]$ such that the regulator's unique optimal allocation function is
\[q_+^*(\theta)= q_0^*(\theta)=\min\left\{q^{\LF}(\theta),\BAR q\right\}.\]
\item Under the negative-correlation benchmark for negative externalities, the regulator's unique optimal allocation function can be implemented by a per-unit tax equal to $\mu$:
\[q_-^*(\theta)=D(c+\mu,\theta).\]
\end{enumerate}
\end{retheorem}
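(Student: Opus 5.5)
The plan is to mirror the proof of \Cref{thm:sign}, with the roles of the two benchmarks reversed because the externality now enters welfare with a negative sign. With negative externalities, total surplus under $(q,m)$ is
\[\int_\Theta\left[u(q(\theta),\theta)-\left[m(\theta)+c\right]q(\theta)\right]\dd F(\theta).\]
Nature therefore wants to \emph{maximize} $\E_F[m(\theta)q(\theta)]$ rather than minimize it. I will first compute Nature's worst-case external harm under each benchmark. In each case I will show that it depends on $q$ only through a one-dimensional statistic, and then solve the regulator's reduced problem exactly as before.

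For part (b), negative correlation, take $m\in\calM_-$ nonincreasing and $q\in\calQ$ nondecreasing. The continuous Chebyshev sum inequality (oppositely ordered functions) gives
\[\E_F[mq]\le \E_F[m]\,\E_F[q]=\mu\,\E_F[q].\]
The constant $m\equiv\mu$ lies in $\calM_-$ and attains this bound. The regulator's problem (P${}_-$) therefore becomes
\[\sup_{q\in\calQ}\int_\Theta\left[u(q(\theta),\theta)-(c+\mu)q(\theta)\right]\dd F(\theta).\]
Its pointwise maximizer is $D(c+\mu,\theta)$. This is nondecreasing in $\theta$ by single crossing, so it lies in $\calQ$, and it is unique by strict concavity of $u$. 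It is implemented by a per-unit tax $\mu$.

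For part (a), positive correlation, the worst case should concentrate harm on the highest types, where $q$ is largest. I will use the upper bound
\[\E_F[mq]\le \mu\,\esssup_\theta q(\theta).\]
To attain it, I will construct nondecreasing $m_n\in\calM_+$ equal to $n\mu$ on $\{F(\theta)\ge 1-1/n\}$ and $0$ elsewhere. Because $q$ is nondecreasing, on this top set $q$ is within $\varepsilon$ of its essential supremum for $n$ large. This is the mirror of the construction in \Cref{thm:unknown}, and here $m_n$ is moreover monotone, so it is feasible in $\calM_+$. Hence
\[\sup_{m\in\calM_+}\E_F[mq]=\mu\,\esssup q.\]
The same construction without the monotonicity requirement shows that the unknown-correlation value is identical. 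This establishes $q_+^*=q_0^*$ and, simultaneously, \Cref{rethm:unknown}.

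The regulator then solves
\[\sup_{\BAR q\in[0,A]}\Bigl[\sup_{q\in\calQ}\bigl\{\E_F[u(q,\theta)-cq]:\esssup q=\BAR q\bigr\}-\mu\BAR q\Bigr].\]
For a fixed $\BAR q$, the inner problem is solved pointwise by $\min\{q^{\LF}(\theta),\BAR q\}$. This uses strict concavity, and I must handle the case $\BAR q>q^{\LF}(\BAR\theta)$, where lowering the cap to $q^{\LF}(\BAR\theta)$ weakly improves the objective, so the solution form is unchanged.

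The main obstacle is confirming that the outer problem in $\BAR q$ has a maximizer and that the overall optimum is unique. For existence, the objective is continuous in $\BAR q$ on the compact set $[0,A]$. For uniqueness, the reduced objective
\[\E_F[u(q)-cq]-\mu\,\esssup q\]
is strictly concave in $q$: the first term is strictly concave and $\esssup$ is convex, so subtracting it preserves concavity. Hence the optimal $q$ is unique up to null sets.
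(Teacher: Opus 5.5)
Your proposal is correct and follows the paper's proof essentially step for step. Part (b) is the same argument: the Chebyshev bound $\E_F[mq]\le\mu\,\E_F[q]$ for oppositely ordered $m$ and $q$, attained by $m\equiv\mu$, followed by the pointwise maximizer $D(c+\mu,\theta)$. Part (a) is exactly the paper's reduction to the proof of Theorem~\ref{rethm:unknown}$'$, whose extremal sequence $m_n$ concentrated on the top types is nondecreasing and therefore already lies in $\calM_+$; your remark that $-\mu\,\esssup q$ is concave, so the reduced objective is strictly concave on the convex set $\calQ$, usefully sharpens the paper's one-line uniqueness claim.
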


Theorems \ref{rethm:unknown}' and \ref{rethm:sign}' show that with negative externalities, the correlation--instrument mapping reverses relative to \Cref{sec:results}.  When the direction of correlation is unknown, and also when willingness to pay is known to be positively correlated with marginal harm, the regulator optimally uses a quantity ceiling, leaving laissez-faire behavior unchanged below the cap.  By contrast, when willingness to pay is known to be negatively correlated with marginal harm, a uniform tax equal to the average marginal external cost $\mu$ is robustly optimal.  Thus, absent \emph{a priori} evidence that high-demand agents generate less marginal harm, the robust criterion favors quantity regulation---quotas, limits, or bans---over price regulation.

\subsection{Regulatory Uncertainty}\label{sec:uncertainty}

While the analysis so far has focused on heterogeneity across agents, this subsection turns to uncertainty faced by the regulator---a classic problem studied by \cite{weitzman74} and a large subsequent literature in environmental economics, law and economics, and other fields.  In this setting, my results show that the desire for robustness with respect to uncertainty can help explain the use of quantity regulation.

\clearpage
Consider a regulator who wishes to regulate an externality-producing industry by reducing the amount of the externality it produces.  Let $q$ denote the level of externality reduction.  For simplicity, suppose that the marginal benefit of externality reduction is constant.  This assumption is a reasonable approximation when the industry has a small overall impact on the environment and is commonly imposed in textbook treatments of the problem (see, e.g., Chapter 21 of \citealp{viscusietal05}).  By contrast, the industry's marginal cost of externality reduction is increasing, reflecting rising abatement costs.

The regulator faces uncertainty over both the cost and benefit of externality reduction.  This uncertainty is captured by the random variables $\theta\in\Theta$ and $\xi\in\R_+$, which represent the aggregate industry's cost and benefit shifters.  The cost function $C:\R_+\times\Theta\to\R_+$ is increasing and strictly convex in $q$ ($C_q>0$ and $C_{qq}>0$), and higher values of $\theta$ correspond to higher marginal costs ($C_{q\theta}>0$).  Likewise, $\xi$ shifts the marginal benefit curve: the marginal benefit of reducing the externality by an additional unit is $\xi$.  

Importantly, cost and benefit shifters may be correlated in an unknown way.  For example, a global demand shock can raise both the marginal benefits of abatement (because baseline emissions from other industries are higher) and the marginal costs of abatement (because reductions must be achieved at higher output levels).  
Conversely, if scarce innovation effort must be allocated between abatement technologies (which lower $\theta$) and mitigation or capture technologies (which lower $\xi$), then cost and benefit shifters may be negatively correlated.  
\cite{stavins96} discusses other examples of how correlation may arise.

This setup departs from the previous literature along two dimensions: the set of feasible policies and the regulator's objective.  

First, the regulator chooses a potentially nonlinear payment schedule $T:\R_+\to\R\cup\{+\infty\}$, under which the industry pays $T(q)$ when it reduces $q$ units of the externality.  In particular, the regulator is not restricted to particular instruments (such as constant marginal prices or fixed quantity instruments), as in the models of \citeauthor{weitzman74} and \citeauthor{stavins96}---and indeed much of the subsequent literature, as critiqued by \cite{kaplowshavell02}.  

Second, the regulator does not know the joint distribution of $(\theta,\xi)$ and does not treat uncertainty over $\theta$ and $\xi$ symmetrically.  As \citeauthor{stavins96} notes, ``more often than not, it is benefit uncertainty that seems to be of substantially greater magnitude.''  For example, while policymakers often have access to information about marginal costs through abatement technologies and compliance costs, marginal benefits depend on more complex processes---such as environmental damages, health impacts, or ecological responses---that are difficult to measure and may vary substantially across states of the world.  Accordingly, the regulator is assumed to know only $\mu=\E[\xi]$ and the marginal distribution $F$ of $\theta$, and she evaluates each policy according to its worst-case ex-ante welfare performance.

The game proceeds as follows:
\begin{enumerate}
\item The regulator commits to a payment schedule $T:\R_+\to\R\cup\{+\infty\}$.
\item The cost and benefit shifters $\theta$ and $\xi$ are realized and observed by the industry.
\item The industry chooses the level of externality reduction $q$ and pays $T(q)$.
\end{enumerate}

While this setup differs slightly from the model in \Cref{sec:framework}, the same analysis applies.  In particular, $\theta$ and $\xi$ can equivalently be interpreted as the industry's private information, and the regulator's problem can be formulated as one of choosing an incentive-compatible mechanism. Thus, \Cref{thm:unknown} implies the following result.

\begin{proposition}\label{prop:uncertainty}
There exists a quantity floor $\und q$ such that the regulator's optimal payment schedule is, up to a lump-sum transfer that does not affect total surplus,
\[T^*(q)=\begin{dcases}
0, &\text{if }q\geq \und q,\\
+\infty\text{ (or a sufficiently large payment)}, &\text{if }q<\und q.
\end{dcases}\]
\end{proposition}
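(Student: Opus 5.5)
The plan is to map the regulatory-uncertainty setting onto the framework of \Cref{sec:framework} and then invoke \Cref{thm:unknown}. First, reinterpret the industry as a single ``agent'' whose private information is $(\theta,\xi)$, drawn from an unknown joint distribution $H$ with known $\theta$-marginal $F$ and known mean $\E_H[\xi]=\mu$. Set $u(q,\theta)\coloneq -C(q,\theta)$ and take the ``marginal cost'' $c$ to be $0$. Then $u_q=-C_q<0$, so this $u$ is decreasing rather than increasing. Strict concavity holds, since $u_{qq}=-C_{qq}<0$. Single crossing holds with the reversed sign, since $u_{q\theta}=-C_{q\theta}<0$. To restore the orientation of \Cref{sec:framework}, I will reparametrize types by $\tilde\theta\coloneq -\theta$, or equivalently by a decreasing relabeling onto a compact interval. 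Then $u_{q\tilde\theta}>0$, and high types are the low-cost types. The monotonicity of $u$ in $q$ is used in the proof of \Cref{thm:unknown} only through the pointwise maximizer $q^{\LF}$, which here is $q^{\LF}(\theta)=0$ (no abatement without regulation). I will also impose an upper bound $A$ on feasible abatement, or note that $C$ convex and benefit at most linear make large $q$ suboptimal, so that $q\in[0,A]$.

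Second, I will verify that the menu $T$ generates exactly the incentive-compatible direct mechanisms. By the taxation principle, any $T$ induces a choice $q(\theta,\xi)\in\argmax_q[-C(q,\theta)-T(q)]$, and conversely. Since $\xi$ does not enter the industry's payoff (the benefit accrues to society, as in the externality model), \Cref{lem:IC} applies verbatim. Implementable allocations are therefore, almost everywhere, functions of $\theta$ alone, and they are nondecreasing in $\tilde\theta$, i.e., nonincreasing in $\theta$. Ex-ante welfare is $\int[-C(q(\theta),\theta)+m(\theta)q(\theta)]\,\dd F$ with $m=\E_H[\xi\mid\theta]$. Because nothing restricts the correlation, Nature chooses from $\calM_0$, and this is exactly problem (\hyperref[eq:P]{P${}_0$}).

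Third, I apply \Cref{thm:unknown}. The unique optimal allocation is $q_0^*=\max\{q^{\LF},\und q\}=\max\{0,\und q\}=\und q$, a constant for all types. It remains to implement a constant allocation with $T^*$. Charging $+\infty$, or any payment exceeding $\max_\theta C(\und q,\theta)$, for $q<\und q$ deters undershooting. Setting $T=0$ for $q\ge\und q$ makes $\und q$ optimal for every type, since $C$ is increasing in $q$. Payments are pure transfers, so $T^*$ is determined only up to a lump sum.

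The main obstacle is the first step. The argument has to check carefully that the proof of \Cref{thm:unknown} never uses $u_q>0$, and that the sign flip of single crossing is handled by the relabeling without changing the essential-infimum argument for Nature's best response. The relabeling preserves the property that Nature concentrates $m$ on the types with the smallest allocation, and monotonicity makes those the types at one end of the interval. If needed, I will redo the one-line Nature bound $\inf_{m\in\calM_0}\E_F[mq]=\mu\,\essinf q$ directly, since it does not use monotonicity at all once the limiting sequence is concentrated near the low-allocation end.
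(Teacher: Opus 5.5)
Your proposal is correct and follows the paper's route. The paper itself presents the result as implied by \Cref{thm:unknown}; its appendix reruns the essential-infimum reduction directly on nonincreasing allocations instead of relabeling types, and then uses $C_q>0$ to show that the constant $\und q$ solves the inner problem, which is exactly what your reduction with $q^{\LF}\equiv 0$ delivers.

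One small caution concerns your fallback claim that convexity of $C$ alone makes large abatement suboptimal. That is false in general, since $\E_F[C_q(q,\theta)]$ can stay below $\mu$ for all $q$. Existence of $\und q$ therefore needs the cap $A$ you propose, or a condition such as $\lim_{q\to\infty}\E_F[C_q(q,\theta)]>\mu$; the paper also leaves this point implicit.
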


Even though the regulator can, in principle, choose any nonlinear payment schedule, \Cref{prop:uncertainty} shows that the optimal policy implements a binding quantity requirement.  Under the optimal payment schedule $T^*$, the industry chooses the minimum level of externality reduction $q=\und q$ in every realization of $(\theta,\xi)$.  Because externality reduction is costly, favorable cost realizations reduce the cost of meeting the requirement but do not induce additional abatement beyond $\und q$.  Thus, robustness collapses the optimal policy to a noncontingent quantity rule as considered by \citeauthor{weitzman74} and \citeauthor{stavins96}, despite the regulator's ability to employ a much richer class of instruments.

\clearpage
As in the quantity floor in \Cref{thm:unknown}, the quantity floor $\und q$ in \Cref{prop:uncertainty} can be characterized by a simple first-order condition whenever it is interior:
\begin{equation}\label{eq:quantity}
\E_F[C_q(\und q,\theta)] = \mu.
\end{equation}
When the expected marginal cost always exceeds the expected marginal benefit (\ie, $\E_F[C_q(0,\theta)]>\mu$), the optimal policy is no intervention.  Conversely, when the expected marginal cost is always less than the expected marginal benefit (\ie, $\lim_{\und q\to+\infty}\E_F[C_q(\und q,\theta)]<\mu$), the optimal quantity floor is to ban the production of the externality entirely.

Despite the intuitive appeal of the first-order condition \eqref{eq:quantity}, it should be emphasized that it does \emph{not} arise in traditional Bayesian models.  As noted by \citet{robertsspence76} and \citet{kaplowshavell02}, for example, the Bayesian-optimal allocation satisfies $C_q(q(\theta),\theta)=\E_F[\xi\mid\theta]$ \emph{pointwise} for each $\theta$, rather than \emph{in expectation} over $\theta$.

The optimal policy characterized in \Cref{prop:uncertainty} also admits a natural interpretation as a cap-and-trade policy.  The quantity floor $\und q$ on abatement levels equivalently imposes a cap on the total amount of externality produced by the industry, which can be implemented by allocating permits to firms in the industry.  While trading permits is not explicitly modeled, it helps allocate abatement efforts more efficiently across firms within the industry, which determines the industry's cost function $C(q,\theta)$.  This logic mirrors the operation of real-world cap-and-trade programs such as the U.S.~Acid Rain Program for sulfur dioxide, the European Union Emissions Trading System, California's cap-and-trade program, and the Regional Greenhouse Gas Initiative.

While \Cref{prop:uncertainty} focuses only on the case in which the correlation between cost and benefit shifters is unknown, \Cref{thm:sign} extends to this setting as well.  In particular, if the regulator knows that the cost and benefit shifters are negatively correlated---so that states of the world in which abatement is cheaper tend to also be those in which abatement is more valuable---then a per-unit subsidy for externality reduction, equal to the average marginal benefit $\mu$, becomes optimal.  As noted previously, such a correlation could arise if scarce innovation effort must be allocated between abatement and capture technologies.  However, where the direction of correlation is uncertain (for example, advances in abatement might have positive spillovers into capture technologies and vice versa), then the analysis favors quantity regulation.

Beyond environmental regulations, \Cref{prop:uncertainty} also has implications for a broader set of policy settings in which regulators face uncertainty about costs and benefits.  The insights of \citet{weitzman74} have been applied to tariff policy in international trade (\eg, \citealp{dasguptastiglitz77} and \citealp{younganderson80}), to safety regulation (\eg, \citealp{shavell84}), and to innovation policy (\eg, \citealp{wright83}).  In these settings, as in the present framework, regulators choose between price and quantity instruments under limited information, and the robustness considerations developed here provide a complementary perspective on why quantity regulations are often observed in practice.

Finally, it is important to emphasize the limited normative scope of \Cref{prop:uncertainty}.  The analysis characterizes the policy a regulator would choose when information about costs and benefits is scarce and the objective is to maximize a worst-case welfare guarantee.  While this may well reflect the informational constraints under which some policies are made, it does not imply that policymakers should ignore opportunities to acquire additional information.  In practice, policymakers can invest in monitoring and data collection, and should optimally trade off the costs of information acquisition against the benefits of implementing more finely targeted policies.

\subsection{Costly Screening}

The final application in this section extends the framework of Section~\ref{sec:framework} to costly screening environments.  In such environments, a good is allocated through a costly and socially wasteful action, such as waiting in line.  Costly screening has been used to study a wide range of economic settings, including bargaining, lobbying, and redistribution programs.  I build on a standard model of costly screening and show that robustness with respect to correlation can rationalize the use of ``non-price mechanisms'' that rely on random allocation rather than screening.

There is a quantity $Q\in(0,1)$ of a good to be allocated to a unit mass of risk-neutral agents.  Each agent demands at most one unit of the good and is privately informed of his value $v$.  Each agent can also wait for time $t\in\R_+$ and is privately informed of his waiting cost $w$ per unit time.  Thus, given a probability $q\in[0,1]$ of receiving the good after time $t$, an agent's payoff is $vq-w t$.

A regulator designs the allocation mechanism under limited information about agents' preferences.  Specifically, the regulator observes the distribution $F$ of the marginal rate of substitution $\theta = v/w\in\Theta$, but not the joint distribution $H$ of $(v,w)$.  In addition, she knows the average waiting cost $\E_H[w]=\mu$.  

Given this informational constraint, the regulator seeks to maximize worst-case aggregate surplus across all joint distributions consistent with these observables.  The regulator chooses a direct mechanism $(q,t)$, consisting of an allocation function $q$ and a wait-time function $t$.  As \citet{dworczaketal21} show, the analog of \Cref{lem:IC} in this environment implies that the mechanism can truthfully elicit only the marginal rate of substitution $\theta$, rather than $v$ and $w$ separately. 

Despite the apparent differences in economic environments, \Cref{prop:costly} shows that the insights of \Cref{thm:unknown} carry over to costly screening: non-price allocation is optimal when the regulator wishes to ensure robustness to the correlation between agents' marginal rates of substitution and waiting costs.

\begin{proposition}\label{prop:costly}
Under the unknown-correlation benchmark, the mechanism $(q^*,t^*)=(Q,0)$ is optimal.  Moreover, it is the unique optimal mechanism if $\und\theta>0$.
\end{proposition}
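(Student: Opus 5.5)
The plan is to reduce the problem to the structure already exploited in the proof of \Cref{thm:unknown}: rewrite aggregate surplus as a conditional-mean-weighted integral of a monotone function of $\theta$, let Nature concentrate weight on low types, and then bound the resulting one-dimensional statistic. Because payoffs are $vq-wt=w(\theta q-t)$, an agent's incentives depend only on $\theta=v/w$. By the analog of \Cref{lem:IC} \citep{dworczaketal21}, any incentive-compatible mechanism is therefore a pair $(q(\theta),t(\theta))$ with $q$ nondecreasing. Define the normalized indirect utility $U(\theta)\coloneq \theta q(\theta)-t(\theta)$. The envelope theorem gives $U(\theta)=U(\und\theta)+\int_{\und\theta}^{\theta}q(s)\,\dd s$, so $U$ is nondecreasing. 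Feasibility requires $t\geq 0$ and $\E_F[q(\theta)]\leq Q$.

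The first step is to express surplus in terms of $U$. Aggregate surplus is $\E_H[vq-wt]=\E_H[w\,U(\theta)]=\E_F[m(\theta)U(\theta)]$, where $m(\theta)\coloneq\E_H[w\mid\theta]$ ranges over nonnegative measurable functions with $\E_F[m]=\mu$. This is exactly Nature's problem in \cref{eq:BR}, with $U$ in place of $q$, and $U$ is nondecreasing just as $q$ was there. The same sequence $m_n$ (which concentrates mass $\mu$ on the bottom $1/n$ quantile) therefore shows that the worst-case surplus equals $\mu\,\essinf_\theta U(\theta)$. By monotonicity and continuity of $U$, this is $\mu\,U(\und\theta)$.

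The second step bounds this guarantee uniformly over feasible mechanisms. Since $t\geq 0$, we have $U(\und\theta)\leq \und\theta\, q(\und\theta)$. Since $q$ is nondecreasing, $q(\und\theta)\leq \essinf q\leq \E_F[q]\leq Q$; here I will work with the right-limit or essential infimum at $\und\theta$ to avoid measure-zero issues. Hence every mechanism guarantees at most $\mu\,\und\theta\, Q$. The mechanism $(Q,0)$ gives $U\equiv\theta Q$, so its guarantee is exactly $\mu\,\und\theta\, Q$, and it is optimal. Note that when $\und\theta=0$ every mechanism guarantees $0$, which is why uniqueness can fail in that case.

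For uniqueness when $\und\theta>0$, an optimal mechanism must make every inequality in the chain bind. This forces $t(\und\theta)=0$ and $\essinf q=Q=\E_F[q]$. A nondecreasing $q$ whose essential infimum equals its mean must equal $Q$ $F$-a.e., because $F$ has positive density. The envelope formula then gives $t(\theta)=\theta q(\theta)-U(\theta)=\theta Q-\bigl(U(\und\theta)+(\theta-\und\theta)Q\bigr)=t(\und\theta)=0$. I expect the main obstacle to be this last step, handled carefully: pinning down the boundary value at $\und\theta$ (essential infimum versus pointwise value of $q$ and $t$), and confirming that the waiting-time nonnegativity constraint is what prevents Nature's low-type concentration from being offset by a participation transfer. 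Everything else parallels the proof of \Cref{thm:unknown} directly.
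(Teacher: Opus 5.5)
Your proposal is correct and follows essentially the same route as the paper. Both write surplus as $\E_F[m(\theta)U(\theta)]$, apply the low-type concentration argument to the nondecreasing schedule $U$ to obtain the guarantee $\mu U(\und\theta)$, and bound $U(\und\theta)\le \und\theta q(\und\theta)\le \und\theta Q$ using $t\ge 0$ and the monotonicity of $q$. Your equality-chasing argument for uniqueness when $\und\theta>0$ fills in a step the paper only asserts; it gives uniqueness up to $F$-null sets, since $q$ may still jump above $Q$ at the single point $\BAR\theta$ with a compensating wait time.
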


Under the mechanism $(q^*,t^*)=(Q,0)$ in \Cref{prop:costly}, the regulator randomly allocates the good without agents engaging in costly screening.  This can be implemented, for example, via a lottery under which each agent receives the good with probability $Q$.

\Cref{prop:costly} helps rationalize the widespread use of institutional arrangements that deliberately avoid costly screening, even in settings where such screening could improve allocative efficiency under stronger informational assumptions.  In the context of dispute resolution, for instance, arbitration and settlement procedures can be understood as responses to uncertainty about how litigation costs correlate with claim strength.  Another example is the increasing popularity of lotteries---rather than ordeals \citep{nicholszeckhauser82}---in the allocation of public housing, vouchers, school seats, and other public services.  Whereas such lotteries are often justified on equity or administrative grounds, \Cref{prop:costly} shows that they are also consistent with maximizing worst-case efficiency.

From a technical perspective, \Cref{prop:costly} shows that the max--min logic of \Cref{thm:unknown} extends to environments in which the main friction affects agents' payoffs directly (here, through waiting costs), rather than entering only through the regulator's objective via the allocation rule.  In the application to costly screening, the regulator effectively chooses a utility schedule through the joint determination of allocation probabilities and waiting times.\footnote{Concretely, the worst-case term takes the form $\E_F[m(\theta)q(\theta)]$ in the framework of \Cref{sec:framework}, while it takes the form $\E_F[m(\theta)U(\theta)]$ in the present application, where $U(\theta)\coloneq\theta q(\theta)-t(\theta)$ is the induced utility schedule.}  As shown in \Cref{app:proofs}, robustness requires applying the earlier max--min arguments to utility rather than to the allocation rule itself.

\section{Conclusion}

Much of the ``prices versus quantities'' debate starts from \citepos{weitzman74} classic benchmark in which the regulator compares relatively simple instruments under relatively rich knowledge of primitives.  A subsequent literature emphasizes that, once a broader set of instruments is allowed, price-based policies can dominate quantity controls \citep{kaplowshavell02}.  

This paper departs from that benchmark on both dimensions.  The regulator is not restricted to simple instruments: through a mechanism-design formulation of the instrument choice problem, she may choose any incentive-compatible policy, including nonlinear price schedules.  At the same time, the regulator has limited information about how willingness to pay covaries with marginal externalities, and thus evaluates policies by their worst-case welfare performance subject only to information about the average externality.

The first main result provides a new explanation for the prevalence of quantity regulation based on heterogeneity and robustness.  When the sign of the correlation between willingness to pay and marginal externalities is unknown, any policy that operates through selection---including sophisticated nonlinear price schedules---can be driven to perform poorly by adverse correlation patterns.  Quantity regulation instead delivers a welfare guarantee by directly controlling externality-relevant behavior, and under the max--min criterion it is uniquely optimal: a quantity floor for positive externalities and a quantity ceiling for negative externalities (with mandates and bans as unit-demand special cases).

The second main result characterizes when price regulation is justified and why it should be simple.  Once the sign of the correlation is known, the adverse selection patterns that drive the worst case are excluded.  In the baseline model with positive externalities, if the correlation is positive, price regulation becomes optimal, and the optimal policy takes a starkly simple form: a uniform subsidy equal to the average marginal externality.  If the correlation is negative, the argument for quantity regulation continues to apply.  In this sense, the determinant of instrument choice is not whether the regulator can implement rich mechanisms---she can---but whether she can credibly sign the correlation that governs selection.

A natural question is whether the correlation patterns that matter for robustness are economically meaningful.  On the one hand, the signs of these correlations do not require extreme assumptions or implausible behavior.  In many environments, willingness to pay and marginal social costs are shaped by common underlying factors, so selection is a realistic possibility rather than a theoretical curiosity.  For example, individuals who are most eager to get vaccinated may already engage in extensive precautionary behavior, reducing the marginal social benefit of their vaccination; conversely, individuals with the lowest willingness to pay may be those whose behavior generates the largest marginal social benefit.  On the other hand, a growing empirical literature suggests that externalities are often driven by tail behavior.  For example, \cite{griffithetal19} note that alcohol-related externalities are highly nonlinear in consumption.  In the United States, 8\% (respectively, 3\%) of alcohol drinkers account for 51\% (respectively, 28\%) of total drinks consumed \citep{esseretal20}.  Similarly, \citet{knittelsandler18} find strong skew in the distribution of tailpipe emissions and observe that the marginal social costs of accidents and congestion are highly skewed as well.  These patterns are consistent with the worst-case correlations characterized in this paper.

Finally, a broader implication of this paper is that quantity regulation is a robust welfare benchmark under limited correlation information.  Beyond externalities, the same logic extends to environments with costly and socially wasteful screening, where robustness can favor rules that suppress screening---such as lotteries rather than ordeals \citep{nicholszeckhauser82}---and procedures that avoid costly litigation.  The framework developed in this paper thus provides a unified way to understand when simple policy instruments are not administrative compromises, but rather justified by robustness concerns in response to informational constraints.  Exploring additional applications of this framework, as well as the role of information acquisition and commitment in mitigating robustness concerns, is a promising direction for future research.

\clearpage

\bibliography{master_bibliography.bib}
\bibliographystyle{econ-econometrica}

\clearpage

\appendix

\section{Omitted Proofs}
\label{app:proofs}

\subsection{Proof of \texorpdfstring{\Cref{lem:IC}}{Lemma 1}}

For any $(\theta,\xi),(\theta',\xi')\in\Theta\times\R_+$, observe that \eqref{eq:IC} requires that
\[\begin{dcases}
u(q(\theta,\xi),\theta)-t(\theta,\xi) &\geq u(q(\theta',\xi'),\theta)-t(\theta',\xi'),\\
u(q(\theta',\xi'),\theta')-t(\theta',\xi') &\geq u(q(\theta,\xi),\theta')-t(\theta,\xi).\\
\end{dcases}\]
Adding both inequalities yields, for any $(\theta,\xi),(\theta',\xi')\in\Theta\times\R_+$,
\[\int_{\theta'}^{\theta} \int_{q(\theta',\xi')}^{q(\theta,\xi)}u_{q\theta}(x,s)\,\dd x\,\dd s\geq 0.\]
Because $u$ satisfies the single-crossing property $u_{q\theta}>0$, this implies that for any $(\theta,\xi),(\theta',\xi')\in\Theta\times\R_+$, $q(\theta,\xi) \geq q(\theta',\xi')$ whenever $\theta > \theta'$. 
Therefore, for each fixed $\xi\in\R_+$, the function $\theta\mapsto q(\theta,\xi)$ is nondecreasing and hence continuous almost everywhere.  In turn, this means that for almost every $\theta\in\Theta$ and $\xi,\xi'\in\R_+$,
\[q(\theta,\xi') \leq \lim_{\e\downarrow 0}q(\theta+\e,\xi) = q(\theta,\xi).\]
Swapping $\xi$ and $\xi'$ yields the opposite inequality; hence, for almost every $\theta\in\Theta$ and $\xi,\xi'\in\R_+$,
\[q(\theta,\xi) = q(\theta,\xi').\]
This implies that there exists $\hat q\in\calQ$ such that $q(\theta,\xi)=\hat q(\theta)$ for almost every $(\theta,\xi)\in\Theta\times\R_+$, as claimed.

\subsection{Proof of \texorpdfstring{\Cref{prop:vaccines}}{Proposition 1}}

As noted in \Cref{sec:vaccines}, \Cref{prop:vaccines} can be obtained as a corollary of \Cref{thm:unknown,thm:sign} by taking the pointwise limit of the optimal allocation functions associated with the strictly concave family of utility functions $u_n(q,\theta)=\theta q-q^2/n$ as $n\to+\infty$.  Below, I instead present a direct proof of \Cref{prop:vaccines} using arguments similar to those in the proofs of \Cref{thm:unknown,thm:sign}.

\begin{enumerate}[label={(\alph*)}]
\item \label{it:vaccines_a} Under the unknown-correlation benchmark, the worst-case external benefit can be bounded as in \Cref{thm:unknown}.  Given the regulator's choice of allocation function $q$,
\[\inf_{m\in\calM_0}\E_F[m(\theta)q(\theta)] \geq \E_F[m(\theta)]\cdot\essinf_{\theta\in\Theta} q(\theta) = \mu\essinf_{\theta\in\Theta} q(\theta).\]
To show that equality is in fact attained, consider the sequence $\{m_n\}_{n=1}^\infty\subset\calM_-\subset\calM_0$ defined by
\[m_n(\theta)\coloneq \begin{dcases}
0, &\text{if }F(\theta)>1/n,\\
n\mu, &\text{otherwise}.
\end{dcases}\] 
Given the regulator's choice of allocation function $q$, for any $\varepsilon>0$, $\Pr_F[q(\theta)<\essinf_{\theta\in\Theta}q(\theta)+\e]>0$ by the definition of the essential infimum.  Consequently, there exists $N_\e\in\NN$ such that
\[\Pr_F\!\left[q(\theta)<\essinf_{\theta\in\Theta}q(\theta)+\e\right]\geq 1/N_\e.\]
Since $q$ is nondecreasing by \Cref{lem:IC}, this choice of $N_\e$ yields
\[\E_F[m_{N_\e}(\theta)q(\theta)] \leq \mu\left[\essinf_{\theta\in\Theta}q(\theta)+\e\right].\] 
In turn, this implies that
\begin{equation}\label{eq:BR_inf}
\lim_{\e\to 0}\E_F[m_{N_\e}(\theta)q(\theta)]=\mu\essinf_{\theta\in\Theta}q(\theta)\implies \inf_{m\in\calM_0}\E_F[m(\theta)q(\theta)] = \mu\essinf_{\theta\in\Theta}q(\theta).
\end{equation}

The worst-case external benefit in \cref{eq:BR_inf} depends on the regulator's choice of $q$ only through its essential infimum, $\essinf_{\theta\in\Theta}q(\theta)$.  Let $\calQ_{[0,1]}\subset\calQ$ denote the set of implementable allocation functions with image in $[0,1]$.  Thus, the regulator's problem (\hyperref[eq:P]{P${}_0$}) can be rewritten as 
\begin{align}
&\sup_{q\in\calQ_{[0,1]}}\left[\int_\Theta \(\theta-c\)q(\theta)\,\dd F(\theta) +\mu\essinf_{\theta\in\Theta}q(\theta)\right]\tag*{}\\
&=\sup_{\und q\in[0,1]}\left[\sup_{q\in\calQ_{[0,1]}}\left\{\int_\Theta \(\theta-c\)q(\theta)\,\dd F(\theta):\essinf_{\theta\in\Theta}q(\theta) = \und q\right\} + \mu \und q\right].\label{eq:reg_inf}
\end{align}
For any choice of $\und q\geq q^{\LF}(\und\theta)$, the solution to the inner constrained problem in \cref{eq:reg_inf} is given by the pointwise maximizer $q^{\LF}(\theta)=\bone_{\theta>c}$ unless the constraint is binding, yielding the solution to the inner constrained problem,
\[q^*(\theta;\und q)=\max\left\{q^{\LF}(\theta),\und q\right\}.\]
This solution form remains the same if $\und q<q^{\LF}(\und\theta)$.  By substitution, the outer maximization problem in \cref{eq:reg_inf} can be written as
\[\sup_{\und q\in[0,1]}\left[\E_F[\(\theta-c\)_+] + \(\mu-\E_F[\(c-\theta\)_+]\)\und q\right].\]
Clearly, the maximum is obtained with $\und q^*=\bone_{\mu\geq \E_F[\(c-\theta\)_+]}$, yielding the desired optimal allocation function
\[q^*(\theta) = \begin{dcases}
1, &\text{if }\mu\geq \E_F[\(c-\theta\)_+],\\
q^{\LF}(\theta)=\bone_{\theta>c}, &\text{if }\mu<\E_F[\(c-\theta\)_+].
\end{dcases}\]
This solution is unique unless $\mu=\E_F[\(c-\theta\)_+]$, in which case the regulator is indifferent across a continuum of optimal allocations; in particular, a mandate and no intervention are both optimal.

The above solution also holds under the negative-correlation benchmark since $\{m_n\}_{n=1}^{\infty}\subset\calM_-$.

\item \label{it:vaccines_b} Given the regulator's choice of allocation function $q$, the continuous version of Chebyshev's sum inequality again implies that
\[\inf_{m\in\calM_+}\E_F[m(\theta)q(\theta)] \geq \E_F[m(\theta)]\cdot\E_F[q(\theta)] = \mu\E_F[q(\theta)].\]
Equality is attained by setting $m^*(\theta) \equiv \mu$, in which case
\[\E_F[m^*(\theta)q(\theta)] = \mu\E_F[q(\theta)] = \inf_{m\in\calM_+}\E_F[m(\theta)q(\theta)].\]

Using the same notation as part~\ref{it:vaccines_a} above, the regulator's problem (\hyperref[eq:P]{P${}_+$}) can be rewritten as 
\[\sup_{q\in\calQ_{[0,1]}}\left[\int_\Theta \(\theta-c\)q(\theta)\,\dd F(\theta) + \mu\E_F[q(\theta)]\right]=\sup_{q\in\calQ_{[0,1]}}\int_\Theta \(\theta-c+\mu\)q(\theta)\,\dd F(\theta).\]
The unique solution is thus given by the pointwise maximizer $q^*(\theta) = \bone_{\theta>c-\mu}$.
\end{enumerate}

\subsection{Proof of \texorpdfstring{\Cref{rethm:unknown}'}{Theorem 1'}}

In the case of negative externalities, Nature's problem is to choose a conditional mean function $m\in\calM_0$ to maximize external harm, given the regulator's choice of allocation function $q$:
\[\sup_{m\in\calM_0} \E_F[m(\theta)q(\theta)]\leq \E_F[m(\theta)]\cdot \esssup_{\theta\in\Theta}q(\theta) = \mu\norm{q}_\infty.\]
To show that equality is in fact attained, consider the sequence $\{m_n\}_{n=1}^\infty\subset\calM_0$ defined by
\[m_n(\theta)\coloneq \begin{dcases}
n\mu, &\text{if } F(\theta)> 1-1/n,\\
0, &\text{otherwise}.
\end{dcases}\]
Given the regulator's choice of allocation function $q$, for any $\varepsilon>0$, $\Pr_F[q(\theta)>\norm{q}_\infty-\e]>0$ by the definition of the essential supremum.  Consequently, there exists $N_\e\in\NN$ such that
\[\Pr_F[q(\theta)>\norm{q}_\infty-\e]\geq 1/N_\e.\]
Since $q$ is nondecreasing by \Cref{lem:IC}, this choice of $N_\e$ yields
\[\E_F[m_{N_\e}(\theta)q(\theta)] \geq \mu\(\norm{q}_\infty-\e\).\] 
In turn, this implies that
\[\lim_{\e\to 0}\E_F[m_{N_\e}(\theta)q(\theta)]=\mu\norm{q}_\infty\implies \sup_{m\in\calM_0}\E_F[m(\theta)q(\theta)] = \mu\norm{q}_\infty.\]
This shows that Nature's best response is to concentrate marginal harm on high types.  Thus, the regulator's problem can be rewritten as 
\begin{align}
&\sup_{q\in\calQ}\left[\int_\Theta \left[u(q(\theta),\theta)-cq(\theta)\right]\,\dd F(\theta) - \mu\norm{q}_\infty\right]\tag*{}\\
&=\sup_{\BAR q\in[0,A]}\left[\sup_{q\in\calQ}\left\{\int_\Theta \left[u(q(\theta),\theta)-cq(\theta)\right]\,\dd F(\theta):\norm{q}_\infty = \BAR q\right\} - \mu \BAR q\right].\label{eq:reg_neg}
\end{align}
For any choice of $\BAR q\leq q^{\LF}(\BAR\theta)$, the solution to the inner constrained problem in \cref{eq:reg_neg} is given by the pointwise maximizer $q^{\LF}(\theta)$ unless the constraint is binding, yielding the desired solution form
\[q_0^*(\theta)=\min\left\{q^{\LF}(\theta),\BAR q\right\}.\]
This solution form remains the same if $\BAR q>q^{\LF}(\BAR\theta)$.  As in \Cref{thm:unknown}, this solution is unique since $u$ is strictly concave in $q$.

\subsection{Proof of \texorpdfstring{\Cref{rethm:sign}'}{Theorem 2'}}

As the result for the positive-correlation benchmark follows from the proof of \Cref{rethm:unknown}', it suffices to prove the result for the negative-correlation benchmark in the case of negative externalities.  Given the regulator's choice of allocation function $q$,
\[\sup_{m\in\calM_-} \E_F[m(\theta)q(\theta)] \leq \E_F[m(\theta)]\cdot\E_F[q(\theta)]=\mu\E_F[q(\theta)].\]
The inequality above follows by the continuous version of Chebyshev's sum inequality.  To show that equality is in fact attained, consider Nature's best response: $m_-^*(\theta)\equiv \mu$.  Given this strategy,
\[\E_F[m_-^*(\theta)q(\theta)] = \mu \E_F[q(\theta)] = \sup_{m\in\calM_-}\E_F[m(\theta)q(\theta)].\]
Thus, the regulator's problem can be rewritten as 
\[\sup_{q\in\calQ}\left[\int_\Theta \left[u(q(\theta),\theta)-cq(\theta)\right]\,\dd F(\theta) - \mu\E_F[q(\theta)]\right]=\sup_{q\in\calQ}\int_\Theta \left[u(q(\theta),\theta)-\(c+\mu\)q(\theta)\right]\,\dd F(\theta).\]
The solution to the regulator's problem is therefore given by the pointwise maximizer $q_-^*(\theta) = D(c+\mu,\theta)$.  As in the proof of \Cref{thm:sign}, uniqueness of the solution follows from the strict concavity of $u$.

\subsection{Proof of \texorpdfstring{\Cref{prop:uncertainty}}{Proposition 2}}

First, consider the regulator's choice of a direct mechanism rather than a payment schedule $T:\R_+\to\R\cup\{+\infty\}$.  Since $\xi$ does not enter the industry's objective function, an analogous argument to \Cref{lem:IC} implies that it suffices to restrict attention to nonincreasing allocation functions $q:\Theta\to\R_+$.  Denote
\[\calQ^\downarrow\coloneq \left\{q:\Theta\to\R_+\text{ is nonincreasing}\right\}.\]
Given the regulator's choice of allocation function $q$, a similar argument as in the proof of \Cref{rethm:unknown}' shows that
\[\inf_{m\in\calM_0}\E_F[m(\theta)q(\theta)] = \mu\essinf_{\theta\in\Theta}q(\theta).\]
Thus, the regulator's problem can be rewritten as
\begin{align}
&\sup_{q\in\calQ^\downarrow}\left[\mu\essinf_{\theta\in\Theta}q(\theta)-\int_\Theta C(q(\theta),\theta)\,\dd F(\theta)\right]\tag*{}\\
&=\sup_{\und q\in\R_+}\left[\mu\und q - \inf_{q\in\calQ^\downarrow}\left\{\int_\Theta C(q(\theta),\theta)\,\dd F(\theta):\essinf_{\theta\in\Theta}q(\theta) = \und q\right\}\right].\label{eq:reg_uncertainty}
\end{align}
For any choice of $\und q$, because costs are increasing with externality reduction ($C_q>0$), the solution to the inner constrained problem in \cref{eq:reg_uncertainty} is simply given by $q(\theta)\equiv \und q$.

Next, to implement such an allocation function, consider the payment schedule defined in the statement of \Cref{prop:uncertainty}:
\[T^*(q)=\begin{dcases}
0, &\text{if }q\geq \und q,\\
+\infty, &\text{if }q<\und q.
\end{dcases}\]
Under $T^*$, the industry chooses $q\ge \underline q$; moreover, since $C_q>0$, it chooses $q(\theta)\equiv \und q$ independently of the realization of $\theta$.  Thus, $T^*$ implements the regulator's optimal allocation function, as claimed.

\subsection{Proof of \texorpdfstring{\Cref{prop:costly}}{Proposition 3}}

Let $m(\theta)\coloneq \E_H[w\mid\theta]$.  For each $*\in\{0,+,-\}$, observe that the regulator's problem is now
\begin{equation}\label{eq:P'}
\sup_{(q,t)\text{ satisfies \eqref{eq:IC}}}\left\{\inf_{m\in\calM_*} \int_\Theta m(\theta)\left[\theta q(\theta) - t(\theta)\right]\,\dd F(\theta):\int_\Theta q(\theta)\,\dd F(\theta)\leq Q\right\}.\tag{P${}_*'$}
\end{equation}
Denoting $U(\theta)\coloneq \theta q(\theta) - t(\theta)$, notice that by the envelope theorem \citep{milgromsegal02},
\[U(\theta) = U(\und\theta) + \int_{\und\theta}^\theta q(s)\,\dd s.\] 
Because $q$ is nonnegative and nondecreasing, $U$ must be nondecreasing and convex.  Consequently, given the regulator's choice of allocation function $q$,
\[\inf_{m\in\calM_0} \E_F[m(\theta)U(\theta)] \geq \E_F[m(\theta)]\cdot \essinf_{\theta\in\Theta} U(\theta) =\mu \essinf_{\theta\in\Theta} U(\theta).\]
The same argument as in the proof of part \ref{it:vaccines_a} of \Cref{prop:vaccines} shows that equality is in fact attained in the limit by the sequence $\{m_n\}_{n=1}^\infty\subset\calM_0$ defined by
\[m_n(\theta)\coloneq \begin{dcases}
0, &\text{if }F(\theta)>1/n,\\
n\mu, &\text{otherwise}.
\end{dcases}\] 
Note that $\essinf_{\theta\in\Theta}U(\theta)$ can be expressed in terms of $q$: $\essinf_{\theta\in\Theta}U(\theta) = U(\und\theta) = \und\theta q(\und\theta) - t(\und\theta)$.

Thus, the regulator's problem (\hyperref[eq:P']{P${}_0'$}) can be rewritten as 
\[\adjustlimits\sup_{q\in\calQ_{[0,1]}}\sup_{t(\und\theta)\in\R_+}\left\{\mu\left[\und\theta q(\und\theta)-t(\und\theta)\right]: \int_\Theta q(\theta)\,\dd F(\theta)\leq Q\right\}.\]
Because wait times must be nonnegative, $t(\und\theta)\geq 0$.  Moreover, since $q$ is nondecreasing,
\[q(\und\theta)\leq \int_{\Theta}q(\theta)\,\dd F(\theta) \leq Q.\]
Therefore, the value of the regulator's problem (\hyperref[eq:P']{P${}_0'$}) is bounded from above by 
\[\adjustlimits\sup_{q\in\calQ_{[0,1]}}\sup_{t(\und\theta)\in\R_+}\left\{\mu\left[\und\theta q(\und\theta)-t(\und\theta)\right]: \int_\Theta q(\theta)\,\dd F(\theta)\leq Q\right\}\leq \mu\und\theta Q.\]
It is easy to see that the mechanism $(q^*,t^*)=(Q,0)$ is incentive-compatible and attains this value.  Hence, $(q^*,t^*)$ is optimal, as claimed.  This solution is unique whenever $\und\theta>0$.

\end{document}

%% file: working_arxiv.tex
\usepackage{amssymb,amsthm,mathtools,bbm}
\usepackage{enumitem,cite}
\usepackage[longnamesfirst]{natbib}
\usepackage{multirow,abstract}
\usepackage{framed,mdframed,cancel,setspace}
\usepackage[bottom,hang]{footmisc}
\usepackage{graphicx,tikz}
\usepackage[colorlinks=true,allcolors=NavyBlue,citecolor=NavyBlue]{hyperref}
\hypersetup{breaklinks=true}
\usetikzlibrary{calc,positioning}
\usepackage[font={small,it}]{caption}
\usepackage[noabbrev]{cleveref}
\usepackage[T1]{fontenc}

\usepackage{breakurl}

\newtheoremstyle{ans}{15pt}{20pt}{}{}{\bfseries}{}{ }{\thmname{#1}\thmnote{ #3}.}
\theoremstyle{ans}
\newtheorem*{soln}{Solution}

\newcounter{lemno}
\newcounter{app_lemno}
\newcounter{propno}
\newcounter{app_propno}
\newcounter{thmno}
\newcounter{rethmno}
\newcounter{clmno}
\newcounter{app_thmno}
\newcounter{defno}
\newcounter{factno}
\newcounter{corno}
\newcounter{assno}
\newcounter{reassno}
\newcounter{examplenum}
\newcounter{remarkno}
\newmdenv[backgroundcolor=blue!7,linewidth=0pt]{bBox}
\newmdenv[backgroundcolor=green!7,linewidth=0pt]{gBox}
\newmdenv[backgroundcolor=red!7,linewidth=0pt]{rBox}
\newmdenv[backgroundcolor=gray!7,linewidth=0pt]{grayBox}

\newtheoremstyle{statementStyle}
{12pt}
{12pt}
{}
{}
{}
{{\bf .}\;}
{0.25em}
{{\bf\thmname{#1}\thmnumber{ #2}}
\thmnote{ (#3)}} 
\makeatother

\theoremstyle{statementStyle}

\newtheoremstyle{statementStyle}
{12pt}
{12pt}
{\it}
{}
{}
{{\bf .}\;}
{0.25em}
{{\bf\thmname{#1}\thmnumber{ #2}}
\thmnote{ (#3)}} 
\makeatother

\theoremstyle{plain}
\newtheorem{lemma}[lemno]{Lemma}

\newtheorem{theorem}[thmno]{Theorem}

\newtheorem{coro}[corno]{Corollary}
\newtheorem{exerciseT}[thmno]{Exercise}
\newtheorem{exampleT}[examplenum]{Example}
\newtheorem{defn}[defno]{Definition}

\Crefname{assumption}{Assumption}{Assumptions}
\newtheorem{proposition}[propno]{Proposition}
\Crefname{proposition}{Proposition}{Propositions}

\Crefname{remark}{Remark}{Remarks}

\Crefname{claim}{Claim}{Claims}

\newtheoremstyle{nbstatementStyle}
{12pt}
{12pt}
{\it}
{}
{}
{{\bf .}\;}
{0.25em}
{{\bf\thmname{#1}}
\thmnote{ (#3)}} 
\makeatother

\theoremstyle{nbstatementStyle}

\newtheoremstyle{restatementStyle}
{12pt}
{12pt}
{\it}
{}
{\bfseries}
{{\bf .}\;}
{0.25em}
{\thmname{#1}\thmnumber{ #2'}
\thmnote{ (#3)}} 
\makeatother

\theoremstyle{restatementStyle}

\newtheorem{retheorem}[rethmno]{Theorem}
\Crefname{retheorem}{Theorem}{Theorems}

\newtheorem{prob}{Problem}

\newtheorem{conj}{Conjecture}
\newtheorem{prope}{Property}

\newtheoremstyle{reexerciseStyle}
{12pt}
{12pt}
{}
{}
{\bfseries}
{{\bf .}\;}
{0.25em}
{\thmname{#1}\thmnumber{ #2}
\thmnote{ (#3)}} 
\makeatother

\newtheoremstyle{exerciseStyle}
{12pt}
{12pt}
{}
{}
{\bfseries}
{{\bf .}\;}
{0.25em}
{\thmname{#1}\thmnumber{ #2}
\thmnote{ (#3)}} 
\makeatother

\theoremstyle{exerciseStyle}

\newtheoremstyle{teach}{15pt}{\topsep}{}{}{\itshape}{}{ }{\thmname{#1}\thmnote{ #3}.}
\theoremstyle{teach}
\newtheorem*{prooflemma}{Proof of lemma}

\newtheoremstyle{rem}{5pt}{0pt}{\color{black}}{}{\bfseries}{}{ }{\thmname{#1}\thmnote{ #3}.}
\theoremstyle{rem}

\definecolor{darkgreen}{rgb}{0.0, 0.75, 0.2}

\makeatletter
\renewcommand*\env@matrix[1][*\c@MaxMatrixCols c]{%
  \hskip -\arraycolsep
  \let\@ifnextchar\new@ifnextchar
  \array{#1}}
\makeatother

\renewcommand{\a}{\alpha}

\newcommand{\dd}{\mathrm{d}}

\newcommand{\e}{\varepsilon}

\newcommand{\R}{\mathbb{R}}

\newcommand{\bone}{\boldsymbol{1}}

\newcommand{\BAR}{\overline}

\newcommand{\und}[1]{\underline{#1}}

\DeclarePairedDelimiter{\norm}{\lVert}{\rVert}

\renewcommand{\(}{\left(}
\renewcommand{\)}{\right)}
\newcommand{\cond}{\,|\,}

\newcommand{\eg}{e.g.}
\newcommand{\ie}{i.e.}

\DeclareMathOperator*{\argmax}{arg\,max}

\DeclareMathOperator{\E}{\mathbf{E}}

\def\NN{\mathbb{N}}

\def\calM{\mathcal{M}}

\def\calQ{\mathcal{Q}}

\usepackage{datetime}

\newdateformat{monthyeardate}{%
  \monthname[\THEMONTH] \THEYEAR}
\newdateformat{daymonthyeardate}{%
  \monthname[\THEMONTH] \THEDAY, \THEYEAR}

\title{\LARGE\hspace*{\fill}\\[1ex]\papertitle\\[2ex]}
\author{\large\name\\[1ex] \affiliation}
\date{\monthyeardate\today}

\emergencystretch=\maxdimen
\makeatletter
\def\@xfootnote[#1]{%
  \protected@xdef\@thefnmark{#1}%
  \@footnotemark\@footnotetext}
\makeatother

\makeatletter
\ifFN@para
\else
  \long\def\@makefntext#1{%
    \ifFN@hangfoot
      \bgroup
      \setbox\@tempboxa\hbox{%
        \ifdim\footnotemargin>0pt
          \hb@xt@\footnotemargin{\@makefnmark\hss}%
        \else
          \@makefnmark\hskip-\footnotemargin      
        \fi
      }%
      \leftmargin\wd\@tempboxa
      \rightmargin\z@
      \linewidth \columnwidth
      \advance \linewidth -\leftmargin
      \parshape \@ne \leftmargin \linewidth
      \footnotesize
      \@setpar{{\@@par}}%
      \leavevmode
      \llap{\box\@tempboxa}%
      \parskip\hangfootparskip\relax
      \parindent\hangfootparindent\relax
    \else
      \parindent1em
      \noindent
      \ifdim\footnotemargin>\z@
        \hb@xt@ \footnotemargin{\hss\@makefnmark}%
      \else
        \ifdim\footnotemargin=\z@
          \llap{\@makefnmark}%
        \else
          \llap{\hb@xt@ -\footnotemargin{\@makefnmark\hss}}%
        \fi
      \fi
    \fi
    \footnotelayout#1%
    \ifFN@hangfoot
      \par\egroup
    \fi
  }
\fi
\makeatother

\makeatletter
\def\@endtheorem{\endtrivlist}
\makeatother

\newcommand{\citepos}[1]{\citeauthor{#1}'s (\citeyear{#1})}